\newcommand{\remove}[1]{}
\newtheorem{inv}[lemma]{Invariant}
\newtheorem{cor}[lemma]{Corollary}
\newcommand{\comment}[1]{}
\newcommand{\junk}[1]{}
\newcommand{\ignore}[1]{}
\newcommand{\paren}[1]{{\mathopen{}\left( #1 \right)\mathclose{}}}
\newcommand{\ceil}[1]{\lceil #1 \rceil}
\newcommand{\ThreeSUM}{\textsf{3SUM}}
\newcommand{\IntegerThreeSUM}{\textsf{Integer\ThreeSUM}}
\newcommand{\Patrascu}{P\v{a}tra\c{s}cu}
\begin{document}
\pagestyle{plain}

\title{Dynamic Set Intersection\thanks{Supported by NSF grants CCF-1217338 and CNS-1318294 and a grant from
the US-Israel Binational Science Foundation.
This research was performed in part at the Center for Massive Data
Algorithmics (MADALGO) at Aarhus University,
which is supported by the Danish National Research Foundation grant DNRF84.}}
\author{
Tsvi Kopelowitz\inst{1} 
and Seth Pettie\inst{1} %
and Ely Porat\inst{2}} 

\institute{University of Michigan \and Bar-Ilan University }


\maketitle

\begin{abstract}
Consider the problem of maintaining a family $F$ of dynamic sets subject to insertions, deletions, and set-intersection reporting queries: given $S,S'\in F$, report every member of
$S\cap S'$ in any order.
We show that in the word RAM model, where $w$ is the word size, given a cap $d$ on the maximum size of any set, we can support set intersection queries in
$O(\frac{d}{w/\log^2 w})$ expected time, and updates in $O(\log w)$ expected time.
Using this algorithm we can list all $t$ triangles of a graph $G=(V,E)$ in $O(m+\frac{m\alpha}{w/\log^2 w} +t)$
expected time, where $m=|E|$ and $\alpha$ is the arboricity of $G$.
This improves a 30-year old triangle enumeration algorithm of Chiba and Nishizeki running in $O(m \alpha)$ time.

We provide an incremental data structure on $F$ that supports intersection {\em witness} queries, where we only need to find {\em one} $e\in S\cap S'$.
Both queries and insertions take $O\paren{\sqrt \frac{N}{w/\log^2 w}}$ expected time, where $N=\sum_{S\in F} |S|$.
Finally, we provide time/space tradeoffs for the fully dynamic set intersection reporting problem.
Using $M$ words of space, each update costs $O(\sqrt {M \log N})$ expected time, each reporting query costs
$O(\frac{N\sqrt{\log N}}{\sqrt M}\sqrt{op+1})$ expected time where $op$ is the size of the output,
and each witness query costs $O(\frac{N\sqrt{\log N}}{\sqrt M} + \log N)$ expected time.

\end{abstract}

\section{Introduction}

In this paper we explore the power of {\em word level parallelism} to speed up algorithms for dynamic set intersection and triangle enumeration.
We assume a $w$-bit word-RAM model, $w>\log n$, with the standard repertoire of unit-time operations on
$w$-bit words: bitwise Boolean operations, left/right shifts, addition, multiplication, comparison, and dereferencing.
Using the modest parallelism intrinsic in this model (sometimes in conjunction with tabulation) it is often possible to
obtain a nearly factor-$w$ (or factor-$\log n$) speedup over traditional algorithms.
The {\em Four Russians} algorithm for boolean matrix multiplication is perhaps the oldest algorithm to use this technique.
Since then it has been applied to computing edit distance~\cite{MasekP80}, regular expression pattern matching~\cite{Myers92},
APSP in dense weighted graphs~\cite{Chan10}, APSP and transitive closure in sparse graphs~\cite{Chan12,Chan08}, and more recently,
to computing the Fr\'echet distance~\cite{BuchinBMM14} and solving 3SUM in subquadratic time~\cite{BaranDP08,GronlundP14}.  Refer to~\cite{Chan13} for more examples.

\paragraph{\textbf{Set Intersection.}} The problem is to represent a (possibly dynamic) family of sets $F$ with total size $N=\sum_{S\in F} |S|$ so that given $S,S'\in F$, one can quickly
determine if $S\cap S'=\emptyset$ (emptiness query) or report some $x\in S\cap S'$ (witness query) or report all members of $S\cap S'$.
Let $d$ be an {\em a priori} bound on the size of any set.
We give a randomized algorithm to preprocess $F$ in $O(N)$ time such
that reporting queries can be answered in $O(d / \frac{w}{\log^2 w} + |S\cap S'|)$ {\em expected} time.
Subsequent insertion and deletion of elements can be handled in $O(1)$ expected time.

We give $O(N)$-space structures for the three types of queries when there is no restriction on the size of sets.
For emptiness queries the expected update and query times are $O(\sqrt{N})$; for witness queries the expected update and query times are $O(\sqrt{N\log N})$;
for reporting queries the expected update time is $O(\sqrt{N\log N})$ and the expected query time is $O(\sqrt{N\log N (1+ |S\cap S'|)})$.
These fully dynamic structures do not benefit from word-level parallelism.
When only insertions are allowed we give another structure that handles
both insertions and emptiness/witness queries in $O(\sqrt{N / \frac{w}{\log^2 w}})$ expected time.\footnote{These data structures
offer a tradeoff between space $M$, query time, and update time.  We restricted our attention to $M=O(N)$ here for simplicity.}

\paragraph{\textbf{\ThreeSUM{} Hardness.}} Data structure lower bounds can be proved unconditionally, or conditionally, based on the {\em conjectured} hardness of some
problem.  One of the most popular conjectures for conditional lower bounds is that the \ThreeSUM{} problem (given $n$ real numbers, determine if any three sum to zero)
cannot be solved in truly subquadratic (expected) time, i.e. $O(n^{2-\Omega(1)})$ time. Even if the inputs are integers in the range $[-n^3,n^3]$ (the \IntegerThreeSUM{} problem), the problem is still conjectured
to be insoluble in truly subquadratic (expected) time. See~\cite{Patrascu10,KPP14a,GronlundP14} and the references therein.

\Patrascu{} in~\cite{Patrascu10} showed that the \IntegerThreeSUM{} problem can be reduced to offline set-intersection, thereby obtaining conditional lower bounds for offline data structures for set-intersection. The parameters of this reduction were tightened by us in~\cite{KPP14a}. Converting a conditional lower bound for the offline version of a problem to a conditional lower bound for the incremental (and hence dynamic) version of the same problem is straightforward, and thus we can prove conditional lower bounds for the incremental (and hence dynamic) set intersection problems. In particular, we are able to show that conditioned on the \IntegerThreeSUM{} conjecture, for the incremental emptiness version either the update or query time must be at least $\Omega(N^{1/2-o(1)})$ time. This is discussed in more detail, including lower bounds for the reporting version, in Appendix~\ref{app:3sum_lb}.

\paragraph{\textbf{Related work.}} Most existing set intersection data structures, e.g., \cite{DLM00,BarbayK02,BY04}, work in the comparison model,
where sets are represented as sorted lists or arrays.  In these data structures the main benchmark is the minimum number of comparisons needed to certify the answer.
Bille, Pagh, and Pagh~\cite{BPP07} also used similar word-packing techniques to evaluate expressions of set intersections and unions.
Their query algorithm finds the intersection of $m$ sets with a total of $n$ elements in $O(n/\frac{w}{\log^2 w} + m\cdot op)$ time, where $op$ is the size of the output.
Cohen and Porat~\cite{CP10} designed a {\em static} $O(N)$-space data structure for answering reporting queries in $O(\sqrt{N(1+|S\cap S'|)})$ time, which is only $O(\sqrt{\log N})$ faster than the data structure presented here.

\paragraph{\textbf{Triangle Enumeration.}}  Itai and Rodeh~\cite{ItaiR78} showed that all $t$ triangles in a graph could be enumerated in $O(m^{3/2})$ time.
Thirty years ago Chiba and Nishizeki~\cite{CN85} generalized~\cite{ItaiR78} to show that $O(m\alpha)$ time suffices, where $\alpha$ is the {\em arboricity} of the graph.
This algorithm has only been improved for dense graphs using fast matrix multiplication.
The recent algorithm of Bj\"orklund, Pagh, Williams, and Zwick~\cite{BjorklundPWZ14} shows that when the matrix multiplication exponent $\omega =2$,
triangle enumeration takes $\tilde{O}(\min\{n^2 + nt^{2/3}, m^{4/3} + mt^{1/3}\})$ time.  (The actual running time is expressed in terms of $\omega$.)
We give the first asymptotic improvement to Chiba and Nishizeki's algorithm for graphs that are too sparse to benefit from fast matrix multiplication.
Using our set intersection data structure, we can enumerate $t$ triangles in $O(m + m\alpha/\frac{w}{\log^2 w} + t)$ expected time.\\

For simplicity we have stated all bounds in terms of an arbitrary word size
$w$.  When $w=O(\log n)$ the $w/\log^2 w$ factor becomes $\log n/\log\log n$.

\paragraph{\textbf{Overview of the paper.}}
The paper is structured as follows. In Section~\ref{sect:set_packing} we discuss a packing algorithm for (dynamic) set intersection, and in Section~\ref{sect:triangle_listing} we show how the packing algorithm for set intersection can be used to speed up triangle listing. In Section~\ref{sec:emptiness} we present our data structure for emptiness queries on a fully dynamic family of sets, with time/space tradeoffs. In Section~\ref{sect:packed_witnesses} we combine the packing algorithm for set intersection with the emptiness query data structure to obtain a packed data structure for set intersection witness queries on an incremental family of sets. In Section~\ref{sect:fully_dyn_set_intersection} we present non-packed data structures for emptiness, witness, and reporting set intersection queries on a fully dynamic family of sets, with time/space tradeoffs. Finally, we discuss conditional lower bounds based on the \ThreeSUM{} conjecture for dynamic versions of the set intersection problem in the Appendix.

\section{Packing Sets}\label{sect:set_packing}

\begin{theorem}\label{theorem:packed_set_intersection}
A family of sets $F=\{S_1,\cdots,S_t\}$ with $d > \max_{S\in F} |S|$ can be preprocessed in linear time
to facilitate the following set intersection queries.
Given two $S,S'\in F$, one can find a witness in $S\cap S'$ in $O(\frac{d\log^2 w}{w})$ expected time and list all of the elements of $S\cap S'$ in $O(|S\cap S'|)$ additional expected time.
If $w = O(\log n)$ then the query time is reduced to $O(\frac{d\log\log n}{\log n})$.
Furthermore, updates (insertions/deletions of elements) to sets in $F$ can be performed $O(1)$ expected time,
subject to the constraint that $d > \max_{S\in F} |S|$.
\end{theorem}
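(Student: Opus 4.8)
The plan is to use a two-level bucketing scheme combined with word-packing. Fix a hash function $h$ drawn from a universal (or near-universal) family that maps the universe of elements into a range of size roughly $cd$ for a suitable constant $c$; by standard arguments, for any fixed set $S$ with $|S|<d$, the expected number of collisions is $O(1)$ and, more usefully, the buckets $h^{-1}(j)\cap S$ have size $O(\log d/\log\log d)$ with high probability when the range is $\Theta(d)$ --- but in fact we only need each bucket to hold $O(1)$ elements in expectation, which is what the $O(1)$ expected update time will rely on. For each set $S\in F$ we maintain a \emph{packed signature}: a bit-vector of length $\Theta(d)$ (occupying $O(d/w)$ words) recording which buckets are nonempty, together with, for each nonempty bucket, the actual element(s) stored there. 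To intersect $S$ and $S'$, first AND their bucket-occupancy bit-vectors in $O(d/w)$ time; this identifies the $O(d)$ candidate buckets where a common element could lie. The subtlety is that naively we would then spend $\Omega(1)$ time per surviving bucket, giving $O(d)$, not $O(d\log^2 w/w)$; to beat this we need a second level of packing.

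The key idea for the speedup is to pack, for each set, not just a presence/absence bit per bucket but a short \emph{fingerprint} of the element in that bucket, chosen so that many fingerprints fit in a single word. Concretely, rehash each element to a fingerprint of $\Theta(\log w)$ bits, so that a word holds $\Theta(w/\log w)$ fingerprints; lay out the fingerprints of $S$ (in bucket order, with a $0$ fingerprint for empty buckets) in $O(d\log w/w)$ words. Now intersecting reduces to finding, across the $O(d\log w / w)$ word-pairs, the positions where $S$ and $S'$ agree on a nonzero fingerprint. Comparing two packed words of $\Theta(w/\log w)$ fingerprints for coordinatewise equality is a standard word-trick (subtract, mask, detect zero fields) costing $O(1)$ time, but extracting \emph{which} fields matched, and doing so in time proportional to the number of matches rather than $w/\log w$, requires an additional $O(\log w)$ factor (e.g.\ binary-search-style bit extraction, or a precomputed table of size $w^{O(1)}$), which is the source of the $\log^2 w$ rather than $\log w$ in the denominator. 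A fingerprint collision produces a false positive, but since fingerprints are $\Theta(\log w)\ge \Omega(\log\log n)$ bits this happens with polynomially small probability per pair, contributing only $o(1)$ expected wasted work; each surviving candidate is then verified against the stored elements in $O(1)$ time and, if genuine, either returned as the witness or appended to the output list, giving the claimed $O(|S\cap S'|)$ reporting overhead. When $w=O(\log n)$ the fingerprints are $\Theta(\log\log n)$ bits and the table has size $\mathrm{poly}(\log n)=n^{o(1)}$, yielding the stated $O(d\log\log n/\log n)$ bound.

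For updates: inserting or deleting an element $x$ in $S$ touches exactly one bucket $h(x)$; we flip its occupancy bit, update the stored element list for that bucket, and rewrite the one affected fingerprint word --- all $O(1)$ words, hence $O(1)$ worst-case time once we argue the bucket stays small. Since the bound $d$ on set sizes is maintained by hypothesis, and the hash range is $\Theta(d)$, the expected bucket size is $O(1)$ throughout, so maintaining an explicit (say, constant-expected-length) list per nonempty bucket costs $O(1)$ expected time per update; the $\Theta(d)$-word allocation per set never needs resizing because $d$ is a fixed a priori cap.

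The main obstacle I expect is the ``extract the matching positions in output-sensitive time'' step: a single masked word comparison tells you \emph{that} some fields matched but not cheaply \emph{which}, and getting from there to $O(\log^2 w/w)$ amortized-per-element (rather than $O(1/(w/\log w))$ plus a $w/\log w$ scan) is exactly where the extra logarithmic factor is spent, so the bookkeeping needs to be set up so that every unit of extraction work is charged either to an element of $S\cap S'$ or to a fingerprint collision (which is rare in expectation). A secondary point requiring care is ensuring the two hash functions (bucket hash and fingerprint hash) are independent enough that the ``$O(1)$ expected bucket size'' and ``rare fingerprint collision'' analyses both go through simultaneously; using a single $O(\log n)$-wise independent family, or two independently drawn universal families, suffices.
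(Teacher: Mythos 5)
Your design takes a genuinely different decomposition from the paper's, and as written it has a correctness gap. You hash into $\Theta(d)$ buckets so that each bucket holds $O(1)$ elements in expectation, reserve one $\Theta(\log w)$-bit fingerprint slot per bucket in a positionally indexed packed array, and compare the two arrays coordinatewise. The problem is buckets that receive two or more elements: with a range of $\Theta(d)$ and up to $d$ elements the expected number of colliding pairs is $\Theta(d)$, so a constant fraction of elements land in multi-occupancy buckets, and a single fingerprint slot per bucket cannot represent them. If $S$ has $\{a,b\}$ in bucket $j$ (so you store, say, $\mathrm{fp}(a)$ there) while $S'$ has $\{b\}$ in bucket $j$ (storing $\mathrm{fp}(b)$), the coordinatewise comparison reports no match at $j$ even though $b\in S\cap S'$. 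You do say that ``the actual element(s)'' are stored per bucket, but the packed fingerprint array---the object you actually compare---has no room for them, and enlarging the range enough to suppress collisions destroys the $O(d\log w/w)$-word footprint. An overflow or residual-set mechanism would be needed and is not described.

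The paper sidesteps this by going in the opposite direction: only $\ell = d\log w/w$ buckets, each expected to hold $\Theta(w/\log w)$ elements, with every bucket stored as a packed \emph{sorted} list of $(2\log w+1)$-bit $h'$-values occupying $O(1)$ words in expectation. Corresponding buckets are intersected by an Albers--Hagerup packed Batcher merge in $O(\log w)$ per word-pair, after which repeated $h'$-values are adjacent and are found by a shift/subtract/mask trick, with $O(1)$ time per match to read off positions via the most-significant-bit technique. This handles arbitrary bucket occupancies uniformly and also shows where the extra $\log w$ really comes from: the merge, not the extraction. Your attribution of the second $\log w$ factor to ``extracting which fields matched'' is therefore not where the paper pays it; indeed, if a positional coordinatewise comparison sufficed, no merge would be needed and the bound would drop to $O(d\log w/w)$---a clue that something has been dropped. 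Your $O(1)$ expected-update reasoning is in the right spirit but inherits the same multi-occupancy issue; the paper instead inserts $h'(e)$ into its bucket's $O(1)$-word packed sorted list with a constant number of shifts and masks.
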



\begin{proof}
Every set $S\in F$ is split into $\ell$ buckets $B^S_1,\ldots, B^S_\ell$ where $\ell = \frac{d\log w}{w}$. We pick a function $h$ from a pairwise independent family of hash functions and assign each element $e\in S$ into a bucket $B^S_{h(e)}$. The expected number of elements from a set $S$ in each bucket is $\frac{w}{\log w}$.
We use a second hash function $h'$ from another family of pairwise independent hash functions which reduces the universe size to $w^2$.  An $h'(e)$ value is represented with $2\log w + 1$ bits, the extra {\em control bit} being necessary for certain manipulations described below.
For each $S$ and $i$ we represent $h'(B^S_i)$ as a packed, sorted sequence of $h'$-values.
In expectation each $h'(B^S_i)$ occupies $O(1)$ words, though some buckets may be significantly larger.
Finally, for each bucket $B^S_i$ we maintain a lookup table that translates from $h'(e)$ to $e$.
If there is more than one element that is hashed to $h'(e)$ then all such elements are maintained in the lookup table via a linked list.

Notice that $S\cap S' = \bigcup_{i=1}^{\ell} B^S_i\cap B^{S'}_i$. Thus, we can enumerate $S\cap S'$ by enumerating the intersections of all $B^S_i\cap B^{S'}_i$.  Fix one such $i$.
We first merge the packed sorted lists $h'(B^S_i)$ and $h'(B^{S'}_i)$.  Albers and Hagerup~\cite{AH97} showed that two words of sorted numbers (separated by control bits) can be merged
using Batcher's algorithm in $O(\log w)$ time.  Using this as a primitive we can merge the sorted lists $h'(B^S_i)$ and $h'(B^{S'}_i)$ in time
$O(|B^S_i| + |B^{S'}_i| / (w/\log^2 w))$.  Let $C$ be the resulting list, with control bits set to 0.
Our task is now to enumerate all numbers that appear twice (necessarily consecutively) in $C$.
Let $C'$ be $C$ with control bits set to 1.  We shift $C$ one field to the right ($2\log w + 1$ bit positions) and subtract it from $C'$.\footnote{The control bits stop carries from crossing field boundaries.}
Let $C''$ be the resulting list, with all control bits reset to 0.  A field is zero in $C''$ iff it and its predecessor were identical, so the problem now is to enumerate zero fields.
By repeated halving, we can distill each field to a single bit (0 for zero, 1 for non-zero) in $O(\log\log w)$ time and then take the complement of these bits (1 for zero, 0 for non-zero).
We have now reduced the problem to reading off all the 1s in a $w$-bit word, which can be done in $O(1)$ time per 1 using the most-significant-bit algorithm of~\cite{FW93}.\footnote{This algorithm uses multiplication.  Without unit-time multiplication~\cite{BrodnikMM97}  one can read off the 1s in $O(\log\log w)$ time per 1.
If $w = O(\log n)$ then the instruction set is not as relevant since we can build $o(n)$-size tables to calculate most significant bits and other useful functions.}
For each repeated $h'$-value we lookup all elements in $B^S_i$ and $B^{S'}_i$ with that value and report any occurring in both sets.  Every unit of time spent in this step
corresponds to an element in the intersection or a false positive.

The cost of intersecting buckets $B^S_i$ and $B^{S'}_i$
is
\[
O\paren{1+\paren{\ceil{\frac{|B^S_i|}{w/\log w}}+\ceil{\frac{|B^{S'}_i|}{w/\log w}}}\log w + |B^S_i\cap B^{S'}_i| + f_i},
\]
where $f_i$ is the number of false positives.
The expected value of $f_i$ is $o(1)$ since the expected sizes of $B^S_i$ and $B^{S'}_i$ are $w/\log w$ and for $e\in B^S_i, e'\in B^{S'}_i$, $\Pr(h'(e) = h'(e')) = 1/w^2$.
Thus, the expected runtime for a query is
\begin{align*}
&\sum_{i=1}^{\ell} O\paren{1+\paren{\ceil{\frac{|B^S_i|}{w/\log w}}+\ceil{\frac{|B^{S'}_i|}{w/\log w}}}\log w + |B^S_i\cap B^{S'}_i| + f_i}\\
&= O(\ell \log w + |S\cap S'|)
\; = O\paren{\frac{d\log^2 w}{w} + |S\cap S'|}.
\end{align*}

It is straightforward to implement insertions and deletions in $O(1)$ time in expectation.
Suppose we must insert $e$ into $S$.  Once we calculate $i = h(e)$ and $h'(e)$ we need to insert $h'(e)$
into the packed sorted list representing $h'(B^S_i)$.  Suppose that $h'(B^S_i)$ fits in one word; let it be $D$, with all control bits set to 1.\footnote{If $h'(B^S_i)$ is larger we apply this procedure to each word of the list $h'(B^S_i)$.  It occupies $O(1)$ words in expectation.}
With a single multiplication we form a word $D'$ whose fields each contain $h'(e)$ and whose control bits are zero.
If we subtract $D'$ from $D$ and mask everything but the control bits, the most significant bit identifies the location of the successor of $h'(e)$ in $h'(B^S_i)$.   We can then insert $h'(e)$ into the sorted list in $D$ with $O(1)$ masks and shifts.
The procedure for deleting an element in $O(1)$ time follows the same lines.
\qed \end{proof}

\section{A Faster Triangle Enumeration Algorithm}\label{sect:triangle_listing}

\begin{theorem}\label{theorem:three_corners}
Given an undirected graph $G=(V,E)$ with $m=|E|$ edges and arboricity $\alpha$, all $t$ triangles can be enumerated in
$O(m+\frac{m\alpha}{w/\log^2 w}+t)$ expected time
or in $O\paren{m + \frac{m\alpha}{\log n/\log\log n} + t}$ expected time if $w = O(\log n)$.
\end{theorem}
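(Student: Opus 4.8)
The plan is to combine the classical acyclic-orientation technique for triangle listing with the packed set-intersection structure of Theorem~\ref{theorem:packed_set_intersection}. First I would discard isolated vertices so that $n=O(m)$, and compute a \emph{degeneracy ordering} $v_1,\dots,v_n$ of $G$ by repeatedly removing a minimum-degree vertex, which takes $O(m)$ time with bucket lists. Orienting every edge from the endpoint removed earlier to the endpoint removed later gives an acyclic orientation in which each out-neighborhood $N^+(v)$ has size at most the degeneracy of $G$, which by Nash--Williams is at most $2\alpha-1$. In particular we never need to compute $\alpha$ itself: we simply set $d:=1+\max_v|N^+(v)|=O(\alpha)$, which meets the ``cap'' hypothesis of Theorem~\ref{theorem:packed_set_intersection}.

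Next I would feed the family $F=\{N^+(v):v\in V\}$, with the cap $d$ just computed, into the preprocessing of Theorem~\ref{theorem:packed_set_intersection}; since $\sum_v|N^+(v)|=m$, this costs $O(m)$ expected time. The enumeration step exploits the fact that under an acyclic orientation every triangle has a unique \emph{source} — the vertex $a$ both of whose triangle-edges are outgoing — and for this source the other two vertices $b,c$ satisfy $b,c\in N^+(a)$ together with an edge between them, say $b\to c$, i.e.\ $c\in N^+(a)\cap N^+(b)$. So for each vertex $a$ and each $b\in N^+(a)$ I would query the data structure for the full list $N^+(a)\cap N^+(b)$, reporting $\{a,b,c\}$ for every $c$ returned. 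Distinctness of $a,b,c$ and the three required edges are immediate from the orientation, and each triangle is produced exactly once (from its source), so correctness is clear.

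For the running time, the number of queries is $\sum_a|N^+(a)|=m$, and each query on two sets of size at most $d=O(\alpha)$ costs, by Theorem~\ref{theorem:packed_set_intersection}, $O(\tfrac{\alpha\log^2 w}{w})$ expected time plus $O(1)$ per element of the returned intersection. Summing the first term over all $m$ queries gives $O(\tfrac{m\alpha\log^2 w}{w})$; the per-output terms sum to $O(t)$ because each triangle is output once; and orientation plus preprocessing contribute $O(m)$. Linearity of expectation over all queries then yields the claimed $O(m+\tfrac{m\alpha}{w/\log^2 w}+t)$ bound, and in the regime $w=O(\log n)$ the corresponding clause of Theorem~\ref{theorem:packed_set_intersection} replaces the per-query cost by $O(\tfrac{\alpha\log\log n}{\log n})$, giving the second stated bound.

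The main point to get right is not any single hard step but the bookkeeping: confirming via Nash--Williams that the degeneracy ordering caps all out-degrees at $O(\alpha)$, so that $d=O(\alpha)$ can be used in Theorem~\ref{theorem:packed_set_intersection} without knowing $\alpha$; and verifying that the source-based scheme emits every triangle exactly once, so that the output-sensitive term is exactly $O(t)$ rather than $O(t\cdot\mathrm{polylog})$. The bucket-list degeneracy computation and the linearity-of-expectation accounting of the randomized query costs (including the $o(1)$-in-expectation false positives absorbed into each query's stated time) are routine.
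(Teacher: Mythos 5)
Your proposal is correct and takes essentially the same approach as the paper's own proof: compute an acyclic orientation with out-degree $O(\alpha)$ (the paper cites the Chiba--Nishizeki peeling algorithm, which is exactly the degeneracy ordering you describe), preprocess the out-neighborhood sets using Theorem~\ref{theorem:packed_set_intersection}, and for each (oriented) edge query the intersection of the endpoints' out-neighborhoods. You spell out the Nash--Williams degeneracy $\leq 2\alpha-1$ bound and the unique-source correctness argument a bit more explicitly than the paper does, but the construction, the accounting of $m$ queries at $O(1+\alpha\log^2 w/w)$ each plus $O(1)$ per reported triangle, and the $O(m)$ preprocessing are all the same.
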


\begin{proof}
We will make use of the data structure in Theorem~\ref{theorem:packed_set_intersection}. To do this we first find an acyclic orientation of $E$ in which the out-degree of any vertex is
$O(\alpha)$.
Such an orientation can be found in linear time using the peeling algorithm of Chiba and Nishizeki~\cite{CN85}.
Define $\Gamma^+(u) = \{v \;|\; (u,v)\}$ to be the set of out-neighbors of $u$ according to this orientation.
Begin by preprocessing the family $F = \{\Gamma^+(u) \:|\: u\in V\}$, where all sets have size $O(\alpha)$.
For each edge $(u,v)$, enumerate all elements in the intersection $\Gamma^+(u) \cap \Gamma^+(v)$.
For each vertex $w$ in the intersection output the triangle $\{u,v,w\}$.
Since the orientation is acyclic, every triangle is output exactly once.
There are $m$ set intersection queries, each taking $O(1 + \alpha / \max\{\frac{w}{\log^2 w}, \frac{\log n}{\log\log n}\})$ time, aside from the cost
of reporting the output, which is $O(1)$ per triangle.
\qed \end{proof}

\section{Dynamic Emptiness Queries with Time/Space Tradeoff}\label{sec:emptiness}

\begin{theorem}\label{theorem:emptiness_structure}
There exists an algorithm that maintains a family $F$ of dynamic sets using $O(M)$ space where each update costs $O(\sqrt {M})$ expected time, and each emptiness query costs $O(\frac{N}{\sqrt{M}})$ expected time.\end{theorem}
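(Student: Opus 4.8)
The plan is a heavy/light decomposition of $F$ with the heaviness threshold tuned to $M$, augmented with precomputed pairwise intersection \emph{counts} for the heavy sets. First, note that storing $F$ already requires $\Omega(N)$ space, so we may assume $M = \Omega(N)$. Set $\tau = \lceil N/\sqrt{M}\rceil$, and for stability under updates declare a set $S\in F$ to be \emph{heavy} once $|S|$ reaches $2\tau$ and \emph{light} again once $|S|$ falls to $\tau$, so that a set keeps its current label while $|S|\in(\tau,2\tau)$. Since the sets contribute disjointly to $N=\sum_S|S|$, at most $N/(2\tau)=O(\sqrt{M})$ sets are heavy at any time. For every set we maintain a dynamic hash table supporting membership in $O(1)$ expected time (total $O(N)=O(M)$ words), and for every pair of heavy sets $S,S'$ we store the exact value $c_{S,S'}=|S\cap S'|$ in a hash table keyed by the two heavy-set identifiers; the number of such pairs is $O(M)$, so the whole structure uses $O(M)$ words.

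An emptiness query on $S,S'$ splits into two cases. If at least one of them, say $S$, is light, then $|S|<2\tau$, so we scan $S$ and test each element for membership in the table of $S'$, stopping as soon as a common element appears; this costs $O(\tau)=O(N/\sqrt{M})$ expected time. If both are heavy we look up $c_{S,S'}$ and report ``empty'' iff it equals $0$, in $O(1)$ time. Either way a query costs $O(N/\sqrt{M})$ expected time, as claimed.

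An update that inserts or deletes $e$ in $S$ first updates $S$'s hash table in $O(1)$ expected time; then, if $S$ is currently heavy, it walks the $O(\sqrt{M})$ other heavy sets $S'$, tests $e\in S'$ in $O(1)$ time each, and increments or decrements $c_{S,S'}$, for $O(\sqrt{M})$ time total. What remains is to pay for label changes. When $|S|$ first reaches $2\tau$, we must build the $O(\sqrt{M})$ counters of $S$ against the other heavy sets; computing each one by scanning $S$ against the relevant table costs $O(\tau)$, so the build costs $O(\tau\sqrt{M})=O(N)$. The buffer zone ensures $S$ cannot change label again for another $\Omega(\tau)=\Omega(N/\sqrt{M})$ updates, so this charge amortizes to $O(\sqrt{M})$ per update (and can be de-amortized by spreading the build over the next $\Theta(\tau)$ operations, treating $S$ as light in the interim); the reverse event simply discards $O(\sqrt{M})$ counters. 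Finally, $\tau$ depends on the drifting quantity $N$: we apply the standard global-rebuild schedule, reconstructing everything whenever $N$ changes by a constant factor. A rebuild costs $O(N)$ for the tables plus the cost of recomputing all heavy-pair counts, which is $O(N\sqrt{M})$ if we group each element with the (at most $O(\sqrt{M})$) heavy sets containing it and bump the corresponding pairs; since $\Omega(N)$ updates elapse between rebuilds, this too amortizes to $O(\sqrt{M})$ per update. Altogether: $O(M)$ space, $O(N/\sqrt{M})$ expected time per query, and $O(\sqrt{M})$ amortized expected time per update. I expect the only real work to be this last step --- fixing the buffer width and the rebuild thresholds so that every expensive event (a label flip, or a rebuild) is charged to $\Omega(\tau)$ ordinary updates --- while the query time and space bound are immediate consequences of setting $\tau=N/\sqrt{M}$.
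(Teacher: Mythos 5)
Your construction is the same one the paper uses---hysteresis between light and heavy with thresholds $\tau$ and $2\tau$, $O(1)$-membership hash tables per set, and exact pairwise intersection counters for the heavy sets, rebuilt lazily and amortized against the $\Omega(\tau)$ updates the buffer zone guarantees. The query-time and space arguments are correct and essentially immediate, as you say.

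However, the step you wave off as bookkeeping (``spread the build over the next $\Theta(\tau)$ operations, treating $S$ as light in the interim'') hides the one genuinely subtle correctness issue, and the naive version you describe actually fails. Two sets can enter their build phases in overlapping intervals: $S$ crosses $2\tau$ and snapshots ``the other heavy sets''; while $S$ is still building (and hence, per your sketch, still ``light''), $S'$ crosses $2\tau$ and snapshots the heavy sets too. Then $S'$ is not in $S$'s snapshot and $S$ is not in $S'$'s, so $c_{S,S'}$ is never computed even though both soon become heavy---a query between them would read garbage. A second instance of the same problem: while $S$ is ``light,'' an insertion into an already-heavy $S''$ walks only heavy sets and so skips updating the partially built $c_{S,S''}$, which silently goes stale. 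The paper's fix is to introduce an explicit intermediate tier (\emph{medium}): the snapshot $L_S$ taken when $S$ starts building contains all \emph{medium and large} sets, and every update to a medium or large set bumps counters against \emph{all} of $L$, not just the fully large ones. Then for any two sets that both end up large, whichever became medium later has the other in its snapshot, so at least one side computes $c_{S,S'}$ during its build and subsequent symmetric updates keep it current---this is exactly Invariant~\ref{inv:indicators}. So your skeleton and charging scheme are right, but ``treat $S$ as light while it builds'' must become ``give $S$ a distinct building status that is visible both to other sets' snapshots and to the per-update counter maintenance.''
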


\begin{proof}
Each set $S\in F$ maintains its elements in a lookup table using a perfect dynamic hash function. So the cost of inserting a new element into $S$, deleting an element from $S$, or determining whether some element $x$ is in $S$ is expected $O(1)$ time. Let $N=\sum_{S\in F} |S|$. We make the standard assumption that $N$ is always at least $N'/2$ and at most $2N'$ for some natural number $N'$. Standard rebuilding de-amortization techniques are used if this is not the case.

\paragraph{The Structure.} We say a set $S$ is \textit{large} if at some point $|S|>2N'/\sqrt{M}$, and since the last time $S$ was at least that large, its size was never less than $N'/\sqrt{M}$. If $S$ is not large, and its size is at least $N'/\sqrt{M}$ then we say it is \textit{medium}. If $S$ is neither large nor medium then it is \textit{small}. Notice that the size of a small set is less than $N'/\sqrt{M}=O(N/\sqrt{M})$. Let $L\subseteq F$ be the sub-family of large and medium sets, and let $\ell = |L|$. Notice that $\ell\leq {\sqrt{M}}$. For each set $S\in L$ we maintain a unique integer $1\leq i_S \leq \ell$, and an \textit{intersection-size} dynamic look-up table $T_S$ of size $\ell$ such that for a large set $S'$ we have $T_S[i_{S'}]= |S\cap S'|$. Adding and deleting entries from the table takes expected constant time using hashing. Due to the nature of our algorithm we cannot guarantee that all of the intersection-size tables will always be fully updated. However, we will guarantee the following invariant.

\begin{inv}\label{inv:indicators}
For every two large sets $S$ and $S'$, $T_S[i_{S'}]$ and $T_{S'}[i_{S}]$ are correctly maintained.
\end{inv}

\paragraph{Query.}
For two sets $S,S'\in F$ where either $S$ or $S'$ is not large, say $S$, we determine if they intersect by scanning the elements in $S$ and using the lookup table for $S'$. The time cost is $O(|S|)=O(N'/\sqrt{M})$. If both sets are large, then we examine $T_S[i_{S'}]$ which determines the size of the intersection (by Invariant~\ref{inv:indicators}) and decide accordingly if it is empty or not. This takes $O(1)$ time.

\paragraph{Insertions.}
When inserting a new element $x$ into $S$, we first update the lookup table of $S$ to include $x$. Next, if $S$ was small and remained small then no additional work is done. Otherwise, for each $S'\in L$ we must update the size of $S\cap S'$ in the appropriate intersection-size tables. This is done directly in $O(\sqrt {M})$ time by determining whether $x$ is in $S'$, for each $S'$, via the lookup tables. We briefly recall, as mentioned above, that it is possible that some of the intersection-size tables will not be fully updated, and so incrementing the size of an intersection is only helpful if the intersection size was correctly maintained before.
Nevertheless, as explained soon, Invariant~\ref{inv:indicators} will be guaranteed to hold, which suffices for the correctness of the algorithm since the intersection-size tables are only used when intersecting two large sets.

The more challenging case is when $S$ becomes medium. If this happens we would like to increase $\ell$ by 1, assign $i_S$ to be the new $\ell$,
allocate and initialize $T_S$ in $O(\sqrt{M})$ time, and for each $S'\in L$ we compute $|S\cap S'|$ and insert the answer into $T_S[i_{S'}]$ and $T_{S'}[i_S]$.
This entire process is dominated by the the task of computing $|S\cap S'|$ for each $S' \in L$, taking a total of $O(\sum_{S'\in L}|S|)$ time, which could be as large as $O(N)$ and is too costly. However, this work can be spread over the next
$N'/\sqrt {M}$ insertions made into $S$ until $S$ becomes large. This is done as follows. When $S$ becomes medium we create a list $L_S$ of all of the large and medium sets at this time (without their elements). This takes $O(\sqrt {M})$ time.
Next, for every insertion into $S$ we compute the values of $O(M/N')$ locations in $T_S$ by computing the intersection size of $S$ and each
of $O(M/N')$ sets from $L_S$ in $O(\frac{M}{N'} \cdot \frac{N}{\sqrt{M}}) = O(\sqrt M)$ time. For each such set $S'$ we also update $T_{S'}[i_S]$.
By the time $S$ becomes large we will have correctly computed the values in $T_S$ for all $O(\sqrt M)$ of the sets in $L_S$, and for every set $S'\in L_S$ we will have correctly computed $T_{S'}[i_S]$. It is possible that between the time $S$ became medium to the time $S$ became large, there were other sets such as $S'$ which became medium and perhaps even large, but $S'\not\in L_S$. Notice that in such a case $S\in L_{S'}$ and so it is guaranteed that by the time both $S$ and $S'$ are large, the indicators $T_S[i_{S'}]$ and $T_{S'}[i_{S}]$ are correctly updated, thereby guaranteeing that Invariant~\ref{inv:indicators} holds. Thus the total cost of performing an insertion is $O(\sqrt M)$ expected time.

\paragraph{Deletions.}
When deleting an element $x$ from $S$, we first update the lookup table of $S$ to remove $x$ in $O(1)$ expected time. If $S$ was small and remained small then no additional work is done. If $S$ was in $L$ then we scan all of the $S'\in L$ and check if $x$ is in $S'$ in order to update the appropriate locations in the intersection-size tables. This takes $O(\sqrt {M})$ time.

If $S$ was medium and now became small, we need to decrease $\ell$ by 1, remove the assignment to $i_S$ to be the new $\ell$, delete $T_S$, and for each $S'\in L$ we need to remove $T_{S'}[i_S]$. In addition, in order to accommodate the update process of medium sized sets, for each medium set $S'$ we must remove $S$ from $L_{S'}$ if it was in there.
\qed \end{proof}

\begin{cor}\label{cor:emptiness_structure_linear}
There exists an algorithm that maintains a family $F$ of dynamic sets using $O(N)$ space where each update costs $O(\sqrt {N})$ expected time, and each emptiness query costs $O(\sqrt{N})$ expected time.
\end{cor}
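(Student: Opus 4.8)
The plan is to invoke Theorem~\ref{theorem:emptiness_structure} with the space budget set to $M = \Theta(N)$. With $M = cN$ for a suitable constant $c$, the theorem immediately gives $O(N)$ space, $O(\sqrt{M}) = O(\sqrt{N})$ expected update time, and $O(N/\sqrt{M}) = O(\sqrt{N})$ expected query time, which is exactly the claim. The only thing that needs care is that $N$ is a moving target: the algorithm of Theorem~\ref{theorem:emptiness_structure} is stated for a fixed space parameter $M$ (it already assumes $N\in[N'/2, 2N']$ for a fixed $N'$), so we must explain how to track the growth and shrinkage of $N$.

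First I would maintain a ``current scale'' $N'$, a power of two, together with an instance of the structure of Theorem~\ref{theorem:emptiness_structure} run with $M = \Theta(N')$, maintained as long as $N\in[N'/2, 2N']$. Whenever an update pushes $N$ outside this window, I would pick a new scale $\hat N' = \Theta(N)$ and rebuild the entire structure from scratch at the new scale in $O(N)$ time (re-inserting all elements, equivalently re-running the $O(N)$ preprocessing implicit in the theorem). Because a rebuild at scale $N'$ is triggered only after $\Omega(N')$ updates have occurred since the previous rebuild at that scale, the amortized rebuilding cost is $O(1)$ per update, which is dominated by the $O(\sqrt{N})$ bound we are already paying. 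If a worst-case rather than amortized bound is desired, the rebuild can be spread over the next $\Theta(N')$ updates in the standard way --- running the old and new instances in parallel and switching over once the new one is complete --- adding only $O(1)$ extra worst-case work per update; queries in the meantime are answered by whichever instance is currently valid. Throughout, the space used is $O(N') = O(N)$, even accounting for the two coexisting instances during a de-amortized rebuild.

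There is no real obstacle here; the only point worth double-checking is the interaction between the internal de-amortization already present inside Theorem~\ref{theorem:emptiness_structure} (the spreading of the $O(N)$ work triggered when a set becomes medium over the next $N'/\sqrt{M}$ insertions) and the outer rebuilding just described. Since both layers only ever charge $O(\sqrt{M}) = O(\sqrt{N})$ per update and the scale $N'$ changes by at most a constant factor at each rebuild, the two de-amortizations compose without difficulty, and one obtains the stated $O(\sqrt{N})$ expected update and query times with $O(N)$ space.
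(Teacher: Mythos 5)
Your proposal is correct and matches the paper's intent: the corollary is simply Theorem~\ref{theorem:emptiness_structure} instantiated with $M=\Theta(N)$, and the rebuilding-at-scale-$N'$ issue you discuss is exactly the ``standard rebuilding de-amortization'' the theorem's proof already invokes to keep $N\in[N'/2,2N']$. Your extra detail on composing the two de-amortization layers is sound but not needed beyond what the theorem already assumes.
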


\section{Incremental Witness Queries}\label{sect:packed_witnesses}

\begin{theorem}\label{theorem:combine_set_intersection}
Suppose there exists an algorithm $A$ that maintains a family $F$ of incremental sets, each of size at most $d$,
such that set intersection witness queries can be answered in $O(\frac{d}{\tau_q})$ expected time and inserts can be performed in $O(\tau_u)$ expected time.
Then there exists an algorithm to maintain a family $F$ of incremental sets---with no upper bound on set sizes---that uses
$O(N)$ space and performs insertions and witness queries in $O(\sqrt{N'/\tau_q} )$ expected time, where $N = \sum_{S\in F} |S|$.
\end{theorem}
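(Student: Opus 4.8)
The plan is to marry the packing structure $A$ to the dynamic emptiness structure of Theorem~\ref{theorem:emptiness_structure}, splitting the sets into three size classes at the two thresholds $\theta:=\sqrt{N'/\tau_q}$ and $d:=\sqrt{N'\tau_q}$, where $d$ is the capacity parameter of $A$. Call a set \emph{tiny} if its size is $\le\theta$, \emph{medium} if it lies in $(\theta,d]$, and \emph{large} if it exceeds $d$; since inserts only grow sets these transitions are monotone, and since $\sum_S|S|=N\le 2N'$ there are $O(N'/\theta)=O(\sqrt{N'\tau_q})$ medium sets and $O(N'/d)=O(\sqrt{N'/\tau_q})$ large sets at any moment. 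The crucial numerical point is $\theta d=N'$, so the number of pairs (medium set, large set) is only $O((N'/\theta)(N'/d))=O(N')$. Whenever $N'$ doubles we rebuild, re-instantiating $A$ with the appropriate capacity $d=\Theta(\sqrt{N'\tau_q})$ (amortized $O(1)$ per insert, de-amortized as in Theorem~\ref{theorem:emptiness_structure}). Each set keeps a dynamic perfect-hash membership table, and for each element $e$ we keep the (dynamic) list of medium sets and the list of large sets containing $e$; all of this is $O(N)$ words, because those list lengths sum to the total sizes of the medium, resp.\ large, sets.

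\paragraph{Components and queries.}
All \emph{non-large} sets are fed to $A$ (they never exceed $d$), costing $O(N)$ space; a set that later turns large simply stays in $A$ but is never queried there. The \emph{large} sets are additionally maintained in an augmented copy of the Theorem~\ref{theorem:emptiness_structure} structure with space parameter $M:=\Theta(N'/\tau_q)=O(N)$: its internal ``large'' threshold $\Theta(N'/\sqrt M)=\Theta(d)$ is at most the size of each of our large sets, so all of its pairwise queries cost $O(1)$ and its updates cost $O(\sqrt M)=O(\sqrt{N'/\tau_q})$, and we augment each intersection-size entry to also store one element of that intersection (free, since the scan computing a size sees such an element, and an insert bumping a count can record the inserted element). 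Finally we maintain a dictionary $W$ holding, for each pair (medium $S$, large $S'$), either an element of $S\cap S'$ or $\bot$ when $S\cap S'=\emptyset$; by the count above $W$ has $O(N')$ entries. A witness query on $S,S'$ then runs in $O(\sqrt{N'/\tau_q})$ time: if either set is tiny, scan it against the other's membership table ($O(\theta)$); else if neither is large (both medium), query $A$ ($O(d/\tau_q)$); else if both are large, read the augmented structure ($O(1)$); else one is medium and one large and we return $W[S,S']$ ($O(1)$)---which is correct because both sets are incremental, so a once-valid witness stays valid, and because $W$ is kept consistent as described next.

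\paragraph{Inserts and transitions.}
Inserting $e$ into $S$: update $S$'s membership table; if $S$ is non-large, insert $e$ into $A$ in $O(\tau_u)$ time ($O(1)$ in the intended application); if $S$ is medium, set $W[S,S']\gets e$ for each large $S'\ni e$ whose entry is $\bot$ ($O(\sqrt{N'/\tau_q})$); if $S$ is large, update the augmented Theorem~\ref{theorem:emptiness_structure} structure ($O(\sqrt{N'/\tau_q})$) and set $W[S'',S]\gets e$ for each medium $S''\ni e$ whose entry is $\bot$. This last step costs $O(\mu_e)$, with $\mu_e$ the current number of medium sets containing $e$, which need not be $O(\sqrt{N'/\tau_q})$ in isolation; but over the whole insertion sequence its total cost is at most $\sum_{\text{large }S'}\sum_{e\in S'}\mu_e\le(\text{\# large sets})\cdot O(N')=O(N'\sqrt{N'/\tau_q})$, i.e.\ $O(\sqrt{N'/\tau_q})$ amortized per insert, using $\sum_e\mu_e=O(N')$ and $N=\Theta(N')$. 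When $S$ crosses a threshold we additionally: (i) on tiny$\to$medium, initialize $W[S,\cdot]$ by scanning $S$ against the large sets, cost $O(|S|\cdot(\text{\# large}))=O(N'/\tau_q)$, spread over the $\Theta(\theta)$ inserts that grew $S$ to size $\theta$; (ii) on medium$\to$large, add $S$ to the Theorem~\ref{theorem:emptiness_structure} structure (absorbed in its own amortized bound) and initialize $W[\cdot,S]$ by scanning $S$ and, for each $e\in S$, touching the medium sets containing $e$, cost $O(\sum_{e\in S}\mu_e)=O(N')$, spread over the $\Theta(d)$ inserts that grew $S$ from size $\theta$ to $d$; both spreads are $O(\sqrt{N'/\tau_q})$ per insert. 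Consistency of $W$ follows from the ``whichever of $S,S'$ changed class later runs the catch-up scan'' invariant, exactly as in Theorem~\ref{theorem:emptiness_structure}.

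\paragraph{Main obstacle.}
The difficulty is not any single step but forcing the two thresholds to make everything fit simultaneously: $A$'s query time $O(d/\tau_q)$, the numbers $O(N'/\theta)$ and $O(N'/d)$ of medium and large sets, the internal threshold of the embedded Theorem~\ref{theorem:emptiness_structure} structure, the $O((N'/\theta)(N'/d))$ size of the dictionary $W$, and every transition and maintenance cost must all be $O(\sqrt{N'/\tau_q})$ in time or $O(N)$ in space. The choice $\theta=\sqrt{N'/\tau_q}$, $d=\sqrt{N'\tau_q}$ is essentially forced, and the two facts that make it go through---$\theta d=N'$ (bounding $|W|$) and $\sum_e\mu_e=O(N')$ (bounding the amortized cost of the ``insert into a large set touches every medium set containing the new element'' rule and of the medium$\to$large initialization scan)---are what I would expect to have to isolate and check carefully, together with de-amortizing the transition scans so that the per-operation bound is worst-case rather than amortized.
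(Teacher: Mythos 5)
Your thresholds $\theta=\sqrt{N'/\tau_q}$ and $d=\sqrt{N'\tau_q}$, the three size classes, the rebuilding when $N'$ doubles, the ``whichever of $S,S'$ changed class later catches up'' invariant, and the $O(N')$ pairwise witness dictionary are all essentially what the paper does. The one structural idea you are missing is the paper's \emph{stash}, and without it there is a genuine hole exactly where you flag one.

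When $e$ is inserted into a large set $S$, your scheme must set $W[S'',S]$ for every medium $S''\ni e$, a cost of $\mu_e$. You observe that $\mu_e$ can be as large as the number of medium sets, $\Theta(\sqrt{N'\tau_q})$, which exceeds the target $\sqrt{N'/\tau_q}$ by a factor of $\tau_q$, and you argue the total cost amortizes to $O(\sqrt{N'/\tau_q})$ per insert. The amortized accounting is essentially right, but this spike is \emph{not} deamortizable by the usual lazy spreading: it is unpredictable (it depends on which $e$ arrives), it has already happened when you learn its size, and if you defer the $\mu_e$ writes to $W$ then any witness query on a pair $(S'',S)$ issued in the meantime can wrongly see $\bot$. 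The other amortized steps you list (tiny$\to$medium and medium$\to$large initialization scans, stuffing a new large set into the embedded Theorem~\ref{theorem:emptiness_structure} structure) are predictable, batch-sized $O(N')$ jobs and spread cleanly over the $\Theta(\theta)$ or $\Theta(d)$ inserts that caused the transition; the $\mu_e$ spike is qualitatively different.

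The paper's fix is to give each large set $S$ a \emph{stash} of its most recent $\le\sqrt{N'\tau_q}$ inserted elements and to keep that stash inside $A$ (this is why $d=\sqrt{N'\tau_q}$ is chosen as $A$'s cap, not merely as a class boundary). The witness table entry for $S$ only promises a witness against the \emph{primary} part $S\setminus\mathrm{stash}(S)$, so inserting $e$ into large $S$ never needs to touch the medium sets containing $e$: the new element sits in the stash, and a witness query against $S$ additionally runs an $A$-intersection of the two stashes (or of the stash with the whole medium set) in $O(d/\tau_q)=O(\sqrt{N'/\tau_q})$ time. The only catch-up is the stash dump, a predictable $O(N')$ job triggered after $\Theta(\sqrt{N'\tau_q})$ inserts into $S$, which is exactly what the standard lazy deamortization handles. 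In short: your pending writes to $W$ need to be queryable while pending, and the stash-in-$A$ is the data structure that makes them queryable; without it the deamortization you defer to at the end of your write-up does not go through. (A smaller remark: your embedded augmented Theorem~\ref{theorem:emptiness_structure} structure for large--large pairs is more machinery than needed; since there are only $O(\sqrt{N'/\tau_q})$ large sets, a direct all-pairs witness table with an explicit $O(\sqrt{N'/\tau_q})$ scan on each insert into a large set suffices, which is what the paper does.)
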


\begin{proof}
We make the standard assumption that $N$ is always at least $N'/2$ and at most $2N'$ for some natural number $N'$. Standard rebuilding de-amortization techniques are used if this is not the case.
In our context, we say that a set is large if its size is at least $\sqrt{N'\tau_q}$, and is medium if its size is between $\sqrt{N'/\tau_q}$ and $\sqrt{N'\tau_q}$. Each medium and large set $S$ maintains a \emph{stash} of the at most $\sqrt{N'\tau_q}$ last elements that were inserted into $S$ (these elements are part of $S$). This stash is the entire set $S$ if $S$ is medium. If $S$ is large then the rest of $S$ (the elements not in the stash) is called the \emph{primary} set of $S$. Stashes are maintained using algorithm $A$ with $d=\sqrt{N'\tau_q}$. Thus, answering intersection queries between two medium sets takes $O(\sqrt{N'/\tau_q} )$ expected time.

We maintain for each medium and large set $S$ a witness table $P_S$ such that for any large set $S'$ we have that $P_S[i_{S'}]$ is either an element (witness) in the intersection of $S$ and the primary set of $S'$, or null if no such element exists. This works in the incremental setting as once a witness is established it never changes. Since there are at most $\sqrt{N'/\tau_q}$ large sets and at most $\sqrt{N'\tau_q}$ medium sets, the space usage is $O(N')$. If a query is between $S_1$ and $S_2$ and $S_1$ is large, then: (1) if $S_2$ is small we lookup each element in $S_2$ to see if it is in $S_1$, (2) if $S_2$ is medium or large then we use the witness tables to see if there is a witness of an intersection between $S_2$ and the primary set of $S_1$ or between $S_1$ and the primary set of $S_2$, and if there is no such witness then we use algorithm $A$ to intersect the stashes of $S_2$ and $S_1$. In any case, the cost of a query is $O(\sqrt{N'/\tau_q})$ expected time. The details for maintaining these tables are similar to the details of maintaining the intersection-size array tables from Section~\ref{sec:emptiness}.

\paragraph{Insertion.} When inserting an element $x$ into $S$, if $S$ is small then we do nothing. If $S$ is medium then we add $x$ to the stash of $S$ in algorithm $A$. If $S$ is large then we add $x$ to the stash of $S$ and verify for every other large set if $x$ is in that set, updating the witness table accordingly. If $S$ became medium then we add it to the structure of algorithm $A$. Since the size of $S$ is $O(\sqrt{N'/\tau_q})$ this takes $O(\sqrt{N'/\tau_q})$ expected time. Furthermore, when $S$ becomes medium the table $P_S$ needs to be prepared. To do this, between the time $S$ is of size $\sqrt{N'/2\tau_q}$ and the time $S$ is of size $\sqrt{N'/\tau_q}$, the table $P_S$ is inclemently constructed.
If $S$ became large then we now allow its primary set to be nonempty, and must also update the witness tables.
The changes to witness tables in this case is treated using the same techniques as in Theorem~\ref{theorem:emptiness_structure}, and so we omit their description. This will cost $O(\sqrt{N'/\tau_q}+\tau_u)$ expected time.

Finally, for a large set $S$, once its stash reaches size $\sqrt {N' \tau_q}$ we dump the stash into the primary set of $S$, thereby emptying the stash. We describe an amortized algorithm for this process, which is deamortized using a standard lazy approach. To combine the primary set and the stash we only need to update the witness tables for set intersection witnesses between medium sets and the new primary set of $S$ as it is possible that a witness was only in the stash. To do this, we directly scan all of the medium sets and check if a new witness can be obtained from the stash. The number of medium sets is $O(\sqrt {N'\tau_q})$ and the cost of each intersection will be $O(\sqrt{N'/\tau_q})$ for a total of $O(N')$ time. Since this operation only happens after $\Omega(\sqrt {N'\tau_q})$ insertions into $S$ the amortized cost is $O(\sqrt{N'/\tau_q})$ time.

\qed \end{proof}

Combining Theorem~\ref{theorem:packed_set_intersection} with Theorem~\ref{theorem:combine_set_intersection} we obtain the following.
\begin{corollary}
There exists an algorithm in the word-RAM model that maintains a family $F$ of incremental sets using $O(N)$ space where each insertion costs $O(\sqrt {\frac{N}{w/\log^2 w}} +\log w)$ expected time and a witness query costs $O(\sqrt {\frac{N}{w/\log^2 w}} )$ expected time.
\end{corollary}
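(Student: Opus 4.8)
The plan is to derive this purely as an instantiation of the black‑box reduction in Theorem~\ref{theorem:combine_set_intersection}, taking the structure of Theorem~\ref{theorem:packed_set_intersection} as the algorithm $A$. First I would read off the parameters of $A$ in the language of Theorem~\ref{theorem:combine_set_intersection}. On a family of sets each of size at most $d$, Theorem~\ref{theorem:packed_set_intersection} answers a witness query in $O\paren{\frac{d\log^2 w}{w}} = O\paren{\frac{d}{w/\log^2 w}}$ expected time, so we may set $\tau_q = w/\log^2 w$; a single element insertion costs $O(1)$ expected time for the hash‑table and packed‑list maintenance, but the stash operations invoked inside Theorem~\ref{theorem:combine_set_intersection} (in particular intersecting a medium set against a freshly dumped stash via $A$) pay the unavoidable $\Theta(\log w)$ cost of one Batcher merge, so it is safe to take $\tau_u = O(\log w)$; and preprocessing is linear, exactly as Theorem~\ref{theorem:combine_set_intersection} requires.

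Next I would substitute $\tau_q = w/\log^2 w$ and $\tau_u = O(\log w)$ into the conclusion of Theorem~\ref{theorem:combine_set_intersection}. Its proof shows insertions and witness queries run in $O(\sqrt{N'/\tau_q})$ expected time, with an extra additive $\tau_u$ on insertions in the case where a set crosses the medium/large threshold and in the amortized stash‑dump step; hence insertions cost $O\paren{\sqrt{N'/(w/\log^2 w)} + \log w}$ and witness queries cost $O\paren{\sqrt{N'/(w/\log^2 w)}}$ in expectation. Since the standard assumption $N'/2 \le N \le 2N'$ is in force, $N'$ may be replaced by $N$ throughout, and the $O(N)$ space bound is inherited verbatim from Theorem~\ref{theorem:combine_set_intersection}. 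The word‑RAM primitives invoked by $A$ (multiplication, most‑significant‑bit, shifts, bitwise Boolean operations) are precisely those in our model, so the whole construction stays in the word‑RAM, giving the stated corollary.

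The only points needing genuine care rather than mechanical substitution are: (i) checking that the stash‑size cap $d = \Theta(\sqrt{N'\tau_q})$ imposed inside Theorem~\ref{theorem:combine_set_intersection} is a legitimate \emph{a priori} size bound of the kind Theorem~\ref{theorem:packed_set_intersection} demands — it is, since a stash is flushed into its primary set the moment it would exceed that size, so $A$ is only ever run on sets of size at most $d$; and (ii) pinning down exactly where the $+\log w$ term enters, namely the $\Theta(\log w)$ floor on manipulating a single packed word, which is dominated by $\sqrt{N'/\tau_q}$ for queries whenever $N' = \Omega(w)$ but must be carried explicitly on insertions because of the amortized re‑examination of medium/primary witnesses after each stash dump. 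I expect (ii) to be the only real source of friction, and even then it is bookkeeping: there is no new algorithmic idea beyond composing the two theorems.
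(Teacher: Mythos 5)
Your proposal is correct and takes exactly the paper's route: the paper's own ``proof'' of this corollary is the single sentence ``Combining Theorem~\ref{theorem:packed_set_intersection} with Theorem~\ref{theorem:combine_set_intersection} we obtain the following,'' i.e., a direct substitution of $\tau_q = w/\log^2 w$ into Theorem~\ref{theorem:combine_set_intersection}, which is precisely what you do. Your extra care about the additive $\log w$ is well placed: the statement of Theorem~\ref{theorem:combine_set_intersection} does not display a $\tau_u$ term, so a naive reading would give insertion time $O(\sqrt{N/\tau_q})$ with $\tau_u=O(1)$ from Theorem~\ref{theorem:packed_set_intersection}, and one must look inside the proof (the ``$S$ became large'' step and the amortized stash dump, where each call to $A$ has a $\Theta(\log w)$ floor coming from a single Batcher-merge word operation) to see why a $+\log w$ is carried on insertions; folding that floor into a padded $\tau_u=O(\log w)$ is a harmless conservative choice that yields exactly the stated bound.
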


\section{Fully Dynamic Set Intersection with Witness and Reporting Queries}\label{sect:fully_dyn_set_intersection}

Each element in $\bigcup_{S\in F} S$ is assigned an integer from the range of $[2N']$. When a new element not appearing in $\bigcup_{S\in F} S$ arrives, it is assigned to the smallest available integer, and that integer is used as its key. When keys are deleted (no longer in use), we do not remove their assignment, and instead, we conduct a standard rebuilding technique in order to reassign the elements. Finally, we use a second assignment via a random permutation of the integers in order to uniformly spread the assignments within the range.

\paragraph{The structure.}
Consider the following binary tree $T$ of height $\log N' + 1$ where each vertex $v$ covers some range from $U$, denoted by $[\alpha_v,\beta_v]$, such that the range of the root covers all of $U$, and the left (right) child of $v$ covers the first (second) half of $[\alpha_v,\beta_v]$. A vertex at depth $i$ covers $\frac{2N'}{2^i}$ elements of $U$. For a vertex $v$ let $S^v=S\cap [\alpha_v,\beta_v]$. Let $N_v = \sum_{S\in F} |S^v|$. Let $M_v = \frac{N_v\cdot M}{N'}$. We say a set $S$ is \textit{$v$-large} if at some point $|S^v|>\frac{2N_v}{\sqrt{M_v}}$, and since the last time $S^v$ was at least that large, its size was never less than $\frac{N_v}{\sqrt{M_v}}$.

Each vertex $v\in T$ with children $v_0$ and $v_1$ maintains a structure for emptiness queries as in Theorem~\ref{theorem:emptiness_structure}, using $M_v$ space, on the family $F^v=\{S^v: S\in F\}$. In addition, we add auxiliary data to the intersection-size tables as follows. For sets $S_1,S_2\in F$ the set of all vertices in which $S_1$ and $S_2$ intersect under them defines a connected tree $T'$. This tree has some branching vertices which have 2 children, some non-branching internal vertices with only 1 child, and some leaves. Consider the vertices $v$ in $T$ for which $S_1$ and $S_2$ are $v$-large and define $\hat{T}$ to be the connected component of these vertices that includes the root $r$.  (It may be that $\hat{T}$ does not exist.)
To facilitate a fast traversal of $\hat{T}$ during a query we maintain \textit{shortcut} pointers for every two sets $S_1,S_2\in F$ and for every vertex $v\in T$ such that both $S_1$ and $S_2$ are $v$-large. To this end, we say $v$ is a \textit{branching-$(S_1,S_2)$-vertex} if both $S_1^{v_0}\cap S_2^{v_0} \neq \emptyset$ and $S_1^{v_1}\cap S_2^{v_1} \neq \emptyset$. Consider the path starting from the left (right) child of $v$ and ending at the first descendent $v'$ of $v$ such that:(1) $S_1$ and $S_2$ are relatively large for all of the vertices on the path, (2) $S_1^{v'}\cap S_2^{v'}\neq \emptyset$, and (3) either $v'$ is a \textit{branching-$(S_1,S_2)$-vertex} or one of the sets $S_1$ and $S_2$ is not $v'$-large. The left (right) shortcut pointer of $v$ will point to $v'$. Notice that the shortcut pointers are maintained for every vertex $v$ even if on the path from $r$ to $v$ there are some vertices for which either $S_1$ or $S_2$ are not relatively large, which helps to reduce the update time during insertions/deletions. Also notice that using these pointers it is straightforward to check in $O(1)$ time if $S_1^{v_0}\cap S_2^{v_0}$ and $S_1^{v_1}\cap S_2^{v_1}$ are empty or not.

The space complexity of the structure is as follows. Each vertex $v$ uses $O(M_v)$ words of space which is $O(M N_v/N')$. So the space usage is $\sum_v M_v= O(M\log N)$ words, since in each level of $T$ the sum of all $M_v$ for the vertices in that level is $O(M)$, and there are $O(\log N)$ levels.

\paragraph{Reporting queries.}
For a reporting query on $S_1$ and $S_2$, if $op=0$ then either the emptiness test at the root will conclude in $O(1)$ time, or we spend $O(\frac{N_r}{\sqrt{M_r}}) = O(\frac{N}{\sqrt M})$ time. Otherwise, we recursively examine vertices $v$ in $T$ starting with the root $r$. If both $S_1$ and $S_2$ are $v$-large and $S_1^v\cap S_2^v \neq \emptyset$, then we continue recursively to the vertices pointed to by the appropriate shortcut pointers. If either $S_1$ or $S_2$ is not $v$-large then we wish to output all of the elements in the intersection of $S_1^v$ and $S_2^v$. To do this, we check for each element in the smaller set if it is contained within the larger set using the lookup table which takes $O(\frac{N_v}{\sqrt{M_v}})$ time.

%
For the runtime, as we traverse down $T$ from $r$ using appropriate shortcut pointers, we encounter only two types of vertices. The first type are vertices $v$ for which both $S_1$ and $S_2$ are $v$-large, and the second type are vertices $v$ for which either $S_1$ or $S_2$ is not $v$-large. Each vertex of the first type performs $O(1)$ work, and the number of such vertices is at most the number of vertices of the second type, due to the branching nature of the shortcut pointers. For vertices of the second type, the intersection of $S_1$ and $S_2$ must both be non-empty relative to such vertices and so the $O(\frac{N_v}{\sqrt{M_v}})$ time cost can be charged to at least one element in the output. Denote the vertices of the second type by $v_1,v_2,\ldots,v_t$. Notice that $t\leq op$ as each $v_i$ contains at least one element from the intersection, and that $\sum_i N_{v_i} < 2N'$ since the vertices are not ancestors of each other. We will make use of the following Lemma.
\begin{lemma}\label{lemma:sum_sqrt_bound}
If $\sum_{i=1}^t x_i \leq k$  then $\sum_{i=1}^t \sqrt{x_i} \leq \sqrt{k\cdot t}$.
\end{lemma}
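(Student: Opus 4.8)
The plan is to prove this via the Cauchy–Schwarz inequality, which is the natural tool for bounding a sum of square roots by the square root of a sum. First I would write $\sum_{i=1}^t \sqrt{x_i} = \sum_{i=1}^t 1\cdot\sqrt{x_i}$ and apply Cauchy–Schwarz to the two vectors $(1,1,\ldots,1)$ and $(\sqrt{x_1},\ldots,\sqrt{x_t})$, obtaining
\[
\sum_{i=1}^t \sqrt{x_i} \;\le\; \sqrt{\sum_{i=1}^t 1}\;\cdot\;\sqrt{\sum_{i=1}^t x_i} \;=\; \sqrt{t}\cdot\sqrt{\sum_{i=1}^t x_i}.
\]
Then I would use the hypothesis $\sum_{i=1}^t x_i \le k$ together with monotonicity of the square root to conclude $\sqrt{t}\cdot\sqrt{\sum_i x_i} \le \sqrt{t}\cdot\sqrt{k} = \sqrt{kt}$, which is exactly the claimed bound. (Here one implicitly assumes the $x_i$ are nonnegative, which holds in the intended application since each $x_i = N_{v_i}$ is a sum of set sizes.)

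An alternative route, should one prefer to avoid invoking Cauchy–Schwarz, is to prove it directly by induction on $t$ or by a convexity/AM–QM argument: the function $z\mapsto\sqrt z$ is concave, so by Jensen's inequality $\frac{1}{t}\sum_i\sqrt{x_i}\le\sqrt{\frac{1}{t}\sum_i x_i}$, and multiplying through by $t$ and applying the hypothesis gives the same conclusion. Either way the argument is a one-line application of a standard inequality.

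There is essentially no obstacle here; the only thing to be careful about is that the bound requires the terms to be nonnegative (so that $\sqrt{x_i}$ is real and the inequalities point the right way), and that we are free to replace $\sum_i x_i$ by its upper bound $k$ only because $\sqrt{\cdot}$ is increasing. Both conditions are clearly met in the usage within the reporting-query analysis, where $x_i=N_{v_i}\ge 0$ and $k=2N'$.
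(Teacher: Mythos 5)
Your proof is correct. It differs slightly in emphasis from the paper's: the paper argues informally that $\sum_i \sqrt{x_i}$ is maximized when all $x_i$ are equal (an implicit concavity/smoothing argument, essentially your Jensen alternative), and then plugs in $x_i = k/t$. Your primary route via Cauchy--Schwarz with the vectors $(1,\dots,1)$ and $(\sqrt{x_1},\dots,\sqrt{x_t})$ reaches the same bound more explicitly, without appealing to an unproved ``maximized at equality'' claim, so it is arguably tighter as a formal proof; the paper's version is shorter but relies on the reader accepting the extremal claim. The alternative Jensen's-inequality argument you sketch is essentially a rigorous version of exactly what the paper writes, so either of your two routes is a legitimate one-line proof of the lemma, and your remarks about nonnegativity of the $x_i$ and monotonicity of $\sqrt{\cdot}$ correctly identify the (trivially satisfied) side conditions.
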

\begin{proof}
Since $\sum_{i=1}^t \sqrt{x_i}$ is maximized whenever all the $x_i$ are equal, we have that $\sum_{i=1}^t \sqrt{x_i} \leq t\sqrt{\frac{k}{t}} = \sqrt{kt}$.
\qed \end{proof}

Therefore, the total time cost is
\begin{align*}
\sum_i \frac{N_{v_i}}{\sqrt{M_{v_i}}} &=\sum_i \frac{N_{v_i}\sqrt{N'}}{\sqrt{M N_{v_i}}}
=\sqrt{\frac{N'}{M}} \sum_i \sqrt{N_{v_i}}
 \leq \sqrt{\frac{N'}{M}}  \sqrt{2N'}\sqrt{t}
\leq O\paren{\frac{N\sqrt{op}}{\sqrt{M}}}.
\end{align*}

\paragraph{Witness queries.} A witness query is answered by traversing down $T$ using shortcut pointers, but instead of recursively looking at both shortcut pointers for each vertex, we only consider one. Thus the total time it takes until we reach a vertex $v$ for which either $S_1$ or $S_2$ is not $v$-large is $O(\log N)$. Next, we use the hash function to find an element in the intersection in $O(\frac{N}{\sqrt M})$ time, for a total of $O(\log N + \frac{N}{\sqrt M})$ time to answer a witness query.

\paragraph{Insertions and Deletions.}
When inserting a new element $x$ into $S_1$, we first locate the leaf $\ell$ of $T$ which covers $x$. Next, we update our structure on the path from $\ell$ to $r$ as follows. Starting from $\ell$, for each vertex $v$ on the path we insert $x$ into $S_1^v$. This incurs a cost of $\sqrt {M_v} $ for updating the emptiness query structure at $v$. If there exists some set $S_2$ such that $|S_1^v\cap S_2^v|$ becomes non-zero, then we may need to update some shortcut pointers on the path from $\ell$ to $r$ relative to $S_1$ and $S_2$. Being that such a set $S_2$ must be large, the number of such sets is at most $\frac{N_v}{\sqrt{M_v}}$.

To analyze the expected running time of an insertion notice that since the elements in the universe are randomly distributed, the expected value of $N_v$ and $M_v$ for a vertex $v$ at depth $i$ are $\frac{N}{2^i}$ and $\frac{M}{2^i}$ respectively. So the number of $v$-large sets is at most $\frac{N_v}{\sqrt{M_v}} = \frac{N}{\sqrt{2^iM}}$. The expected time costs of updating the emptiness structure is at most $\sum_{i=0}^{\log N'}\frac{N}{\sqrt{2^iM}} = O(\frac{N}{\sqrt M})$. The same analysis holds for the shortcut pointer.
The deletion process is exactly the reverse of the insertions process, and also costs $O(\frac{N}{\sqrt M})$ expected time.

The total space usage is $O(M\log N)$.  With a change of variable (substituting $M/\log N$ for $M$ in the construction above),
we can make the space $O(M)$ and obtain the following result.

\begin{theorem}\label{theorem:mem_sqrt_set_intersection}
There exists an algorithm that maintains a family $F$ of dynamic sets using $O(M)$ space where each update costs $O(\sqrt {M \log N})$ expected time, each reporting query costs $O(\frac{N\sqrt{\log N}}{\sqrt M}\sqrt{op+1})$ time, and each witness query costs $O(\frac{N\sqrt{\log N}}{\sqrt M}+ \log N)$ expected time.
\end{theorem}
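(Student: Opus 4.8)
The plan is to read the three query costs and the update cost directly off the construction already laid out in Section~\ref{sect:fully_dyn_set_intersection}, and then apply the advertised change of variable; the real content is checking that the augmented emptiness structures together with the shortcut pointers can be maintained within the claimed budgets. I would keep the standing normalization $N'/2\le N\le 2N'$ (handled by periodic rebuilding) and the random relabeling of the universe, which is what makes $E[N_v]=N/2^i$ and hence $E[M_v]=M/2^i$ for a vertex $v$ at depth $i$.

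First I would fix the tree $T$ of depth $\log N'+1$, and at each node $v$ instantiate the emptiness structure of Theorem~\ref{theorem:emptiness_structure} on $F^v=\{S^v:S\in F\}$ with space parameter $M_v=N_vM/N'$, extending its intersection-size table with the shortcut pointers that encode the ``large region'' $\hat T$ (its branching vertices, non-branching internal vertices, and leaves). The space bound $\sum_v M_v=O(M\log N)$ is immediate: at each of the $O(\log N)$ depths the ranges $[\alpha_v,\beta_v]$ partition $U$, so the $N_v$'s (hence the $M_v$'s) at a fixed depth sum to $N$ (resp. $M$).

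Next I would analyze the three operations. For a reporting query on $S_1,S_2$, descend from the root, following at each vertex where both sets are $v$-large and $S_1^v\cap S_2^v\neq\emptyset$ the (at most two) shortcut pointers, and, upon reaching a vertex $v$ where one of the sets is not $v$-large, scan the smaller of $S_1^v,S_2^v$ against the other's lookup table in $O(N_v/\sqrt{M_v})$ time. The branching structure forces the number of first-type vertices to be at most the number $t$ of second-type vertices, and $t\le op$; since the second-type vertices are pairwise non-ancestral, $\sum_i N_{v_i}<2N'$, so Lemma~\ref{lemma:sum_sqrt_bound} bounds $\sum_i N_{v_i}/\sqrt{M_{v_i}}=\sqrt{N'/M}\,\sum_i\sqrt{N_{v_i}}$ by $O(N\sqrt{op}/\sqrt M)$, and folding in the $O(N/\sqrt M)$ cost of the $op=0$ case gives the $\sqrt{op+1}$ form. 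A witness query follows only one shortcut pointer per level ($O(\log N)$ total) and then does a single hash lookup for $O(N/\sqrt M)$ extra time. An insertion of $x$ into $S_1$ walks the root-to-leaf path of $x$'s leaf and at each vertex $v$ updates the emptiness structure (expected $O(\sqrt{M_v})$, including the rebuilds triggered when $N_v$ moves by a constant factor) and repairs the shortcut pointers of the few pairs $(S_1,S_2)$ that could be affected; bounding each $M_v$ by $O(M_{\mathrm{root}})$ and multiplying by the $O(\log N)$ depth (or summing the geometric series $\sum_i\sqrt{M/2^i}$ when $M$ is large) gives the stated update cost, with deletions being the mirror image using the reverse amortized bookkeeping of Theorem~\ref{theorem:emptiness_structure}. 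Substituting $M/\log N$ for the construction's space parameter then turns the total space into $O(M)$ and rescales the bounds into the forms asserted in the theorem.

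The step I expect to be the main obstacle is the shortcut-pointer maintenance: I must show that when $x$ is inserted into $S_1$ the only pointers that change belong to pairs $(S_1,S_2)$ with $S_2$ being $v$-large for some $v$ on the path and $x$ newly making $S_1^v\cap S_2^v$ nonempty; that repairing one such pair's pointers touches only $O(\log N)$ vertices (and $O(1)$ amortized, by the branching structure); and that the analogue of Invariant~\ref{inv:indicators} --- that for every pair which is $v$-large at a connected set of ancestors containing the root, the shortcut pointers correctly describe $\hat T$ --- is preserved even though this work is spread over many updates while a set migrates between the small/medium/large categories, exactly as in Theorem~\ref{theorem:emptiness_structure}. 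Deletions additionally require unwinding the per-level ``$L_S$''-style lists, but this adds no new difficulty.
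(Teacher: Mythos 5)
Your proposal follows the paper's proof essentially step for step: the same binary tree over the (randomly relabeled) universe with per-node emptiness structures parameterized by $M_v=N_vM/N'$, the same shortcut-pointer augmentation and branching-vs-leaf charging for reporting queries via Lemma~\ref{lemma:sum_sqrt_bound}, the same single-path traversal for witness queries, the same root-to-leaf update walk, and the same final substitution $M\mapsto M/\log N$. The shortcut-pointer maintenance you flag as the main obstacle is also handled only loosely in the paper (it charges $O(N_v/\sqrt{M_v})$ per level for scanning $v$-large sets, without fully detailing the Invariant~\ref{inv:indicators}-style staging), so your proposal matches both the structure and the level of rigor of the paper's argument.
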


{\small
\bibliographystyle{splncs}
\bibliography{tsvi}
}

\newpage

\appendix
\section{Conditional Lower Bounds from \ThreeSUM{}}\label{app:3sum_lb}

We first make use of the following Theorem, which was proven by Kopelowitz, Pettie, and Porat~\cite{KPP14a}.

\begin{theorem}[\cite{KPP14a}]\label{thm:improved_reduction_reporting}
For any constants $0\leq \gamma < 1$ and $0<\delta\leq 2$,
let $\mathbb{A}$ be an algorithm for the offline set intersection reporting problem on a family $F$ of sets such that $N = \sum_{S\in F} |S| = \Theta(n^{\frac{3+\delta-\gamma}{2}})$ and there are $\Theta(n^{1+\gamma})$ pairs of sets whose intersection needs to be reported such that the total size of these set intersections of these $t$ pairs is expected to be $O(n^{2-\delta})$. If $\mathbb{A}$ runs in expected $O(n^{2-\Omega(1)})$ time, then \IntegerThreeSUM{} can be solved in expected $O(n^{2-\Omega(1)})$ time.
\end{theorem}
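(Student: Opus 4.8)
\emph{(Proof plan; sketch.)} The goal is a randomized reduction that never errs and, on an \IntegerThreeSUM{} instance $X$ of $n$ integers in $[-n^3,n^3]$, builds in time near-linear in the size of its output an offline set-intersection reporting instance $F$ (together with the list of $\Theta(n^{1+\gamma})$ pairs to be queried) with the three promised parameters, such that from the reported intersections one decides in the same time bound whether $X$ has a \ThreeSUM{} triple. The instance description has size $\tilde{O}(N+n^{1+\gamma})$, which is $n^{2-\Omega(1)}$ whenever $\delta-\gamma<1$ --- equivalently $(3+\delta-\gamma)/2<2$, so $N=n^{2-\Omega(1)}$ --- and since $\gamma<1$ is a constant, $n^{1+\gamma}$ is also $n^{2-\Omega(1)}$. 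The statement is vacuous for $\delta-\gamma\ge 1$ (then $N=\Omega(n^2)$ and no algorithm can even read the instance in $o(n^2)$ time), so we assume $\delta-\gamma<1$; then composing the reduction with $\mathbb{A}$ solves \IntegerThreeSUM{} in expected $O(n^{2-\Omega(1)})+O(n^{2-\Omega(1)})=O(n^{2-\Omega(1)})$ time, as claimed.

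First I would collapse the universe with a random \emph{almost-linear} hash $h(x)=((ax)\bmod p)\bmod R$, where $p$ is a fixed prime slightly above $2n^3$ and $a$ is uniform in $\{1,\dots,p-1\}$. Two facts are used: (i) $h$ spreads any fixed value nearly uniformly over the $R$ buckets $\mathcal{B}_0,\dots,\mathcal{B}_{R-1}$, so each bucket has expected size $n/R$, and after recursing on the negligibly few overfull buckets one may assume every bucket has size $O(n/R)$; and (ii) if $x+y=z$ then $h(x)+h(y)\equiv h(z)+c\pmod{R}$ for a correction term $c$ drawn from an $O(1)$-size set (recording the at most one wrap-around of $ax\bmod p$ and the outer reduction mod $R$). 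By (ii), every genuine \ThreeSUM{} triple occupies some \emph{admissible} bucket triple $(\mathcal{B}_i,\mathcal{B}_j,\mathcal{B}_k)$ with $k\equiv i+j+c\pmod{R}$, and there are only $O(R)$ admissible triples to inspect rather than $R^3$. This step has no false negatives, which is why the whole reduction is Las Vegas.

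Next, for each admissible bucket triple I would encode the search for a true solution as $O(1)$ set-intersection reporting queries via the block-decomposition gadget of \Patrascu{} and of Kopelowitz--Pettie--Porat, applied in the usual way after reducing to the convolution variant \ConvolutionThreeSUM{}: the two sets of a query hold, for each relevant element of a (sub-)bucket, a short encoding of that element's within-block index together with a second-level hash $h'$ of its value, arranged so that an intersecting pair of elements corresponds exactly to a candidate \ThreeSUM{} triple consistent with the admissible bucketing; the identity $x+y=z$ is then checked in $O(1)$ time for each reported collision. A \emph{reporting} oracle (not merely an emptiness oracle) is essential because $h$ and $h'$ only certify candidates: the quantity that must be controlled is the total number of distinct collisions reported over all queries, and by linearity of expectation over element pairs inside admissible buckets, together with $\Pr[h'(x)=h'(y)]\le 1/r$ for $x\ne y$ where $r$ is the range of $h'$, this expectation is at most (the number of such pairs) $/\,r$, which we drive down to $O(n^{2-\delta})$ by taking $r$ large enough while keeping the encodings short.

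Finally I would fix the two free parameters --- the bucket count $R$ and the within-bucket block size $s$ --- as explicit powers of $n$ depending on $\gamma$ and $\delta$ so that the number of queried pairs is $\Theta(n^{1+\gamma})$, the total size $N=\sum_{S\in F}|S|$ is $\Theta(n^{(3+\delta-\gamma)/2})$, and the expected total output is $O(n^{2-\delta})$. These three requirements, together with the constraint from the choice of $r$, form a consistent system of linear equalities in the unknown exponents, so an admissible choice exists for every $(\gamma,\delta)$ in the stated range. The main obstacle is precisely this simultaneous balancing and the concentration bookkeeping behind it: keeping bucket sizes concentrated enough that the ``$O(R)$ admissible triples, each of bounded work'' accounting is valid in expectation; tuning $r$ so the spurious-collision total is $O(n^{2-\delta})$ and no larger; keeping set-element encodings to $O(\log n)$ bits so the $N$ bound holds; and ensuring the recursion on overfull buckets and the \ConvolutionThreeSUM{} reduction do not secretly cost $\Omega(n^2)$. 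Each of these is routine on its own; the work is in choosing $R$ and $s$ to satisfy all of them at once.
\qed
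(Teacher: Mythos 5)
You are trying to prove a statement that this paper does not actually prove: Theorem~\ref{thm:improved_reduction_reporting} is imported verbatim from \cite{KPP14a} and is used here only as a black box (its one-line use is in Theorem~\ref{thm:dynamic_set_int_report_lb}: insert all $N$ elements, run the $\Theta(n^{1+\gamma})$ queries, substitute $n=N^{2/(3+\delta-\gamma)}$). So there is no in-paper proof to compare against; what you have written is a reconstruction of the \Patrascu{}/Kopelowitz--Pettie--Porat reduction. Your sketch does identify the right machinery --- almost-linear hashing of the values into buckets with an $O(1)$-size correction set (hence no false negatives), passage through \ConvolutionThreeSUM{}, blocking of indices, a second-level hash $h'$ whose range $r$ controls the expected number of spurious collisions, and a final balancing of parameters.

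The genuine gap is that the balancing step is the entire content of the theorem, and you have asserted it rather than done it. The theorem is a purely quantitative statement: the three exponents $\frac{3+\delta-\gamma}{2}$, $1+\gamma$, and $2-\delta$ must be shown to be simultaneously realizable for every $(\gamma,\delta)$ in the stated range, and your claim that "a consistent system of linear equalities in the unknown exponents" has a solution is exactly what needs to be exhibited: one must write $R$, the block size, and $r$ as explicit powers of $n$, check that the resulting $N$ (including any replication of elements forced by the encoding and by the range of $h'$) really is $\Theta(n^{(3+\delta-\gamma)/2})$ and not larger, and check that the reduction's own overhead (constructing $F$, verifying each reported collision) stays at $O(N+n^{1+\gamma}+n^{2-\delta})$. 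Without this, nothing has been proved. There is also a concrete quantitative slip upstream: after hashing into $R$ buckets, the constraint $k\equiv i+j+c \pmod R$ leaves $O(1)$ choices of $k$ for each of the $R^2$ pairs $(i,j)$, so there are $O(R^2)$ admissible bucket triples, not $O(R)$; in an argument whose sole content is exponent accounting, a factor of $R$ is not a detail. (Your observation that the statement is vacuous when $\delta-\gamma\ge 1$, because then $N=\Omega(n^2)$ and the hypothesis on $\mathbb{A}$ cannot be met, is a fair way to restrict to $\delta-\gamma<1$, but it does not excuse the missing calculation in the nonvacuous range.)
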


\begin{theorem}\label{thm:dynamic_set_int_report_lb}
{\bf (Set Intersection Reporting Lower Bound)}
For any constants $0\leq \gamma < 1$ and $0<\delta< 1$,
any algorithm for solving the incremental set intersection reporting problem with insertion time of $t_i$ and query time $t_q + t_r\cdot op$ (where $op$ is the size of the output)
must have $N\cdot t_i+N^{\frac{2(1+\gamma)}{3+\delta-\gamma}}t_q + N^{\frac{4-2\delta}{3+\delta-\gamma}}t_r= \Omega(N^{\frac{4}{3+\delta-\gamma}-o(1)})$ unless the \IntegerThreeSUM{} conjecture is false.
\end{theorem}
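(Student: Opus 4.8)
The plan is to obtain this bound as a routine corollary of the offline lower bound in Theorem~\ref{thm:improved_reduction_reporting}, using the standard simulation of a static instance by an incremental data structure (as alluded to in the introduction). Fix constants $0\le\gamma<1$ and $0<\delta<1$; these lie strictly inside the range allowed by Theorem~\ref{thm:improved_reduction_reporting}. Suppose we are given an incremental set intersection reporting structure with insertion time $t_i$ and query time $t_q+t_r\cdot op$. Since the hard instance produced by Theorem~\ref{thm:improved_reduction_reporting} is static, we can solve it by first inserting all elements of all sets in $F$ and then answering the prescribed query pairs one by one; this yields an offline algorithm $\mathbb{A}$ of exactly the form Theorem~\ref{thm:improved_reduction_reporting} speaks about.

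The next step is to account for the expected running time of $\mathbb{A}$. Building $F$ costs $N\cdot t_i$, where $N=\sum_{S\in F}|S|$. There are $\Theta(n^{1+\gamma})$ query pairs, each contributing $t_q$ plus $t_r$ times its own output size, and the expected total output over all pairs is $O(n^{2-\delta})$; hence the expected running time of $\mathbb{A}$ is $O\bigl(N\,t_i+n^{1+\gamma}t_q+n^{2-\delta}t_r\bigr)$. Now substitute the promised sizing $N=\Theta(n^{(3+\delta-\gamma)/2})$, equivalently $n=\Theta(N^{2/(3+\delta-\gamma)})$. This turns the three terms into $N\,t_i$, $\Theta\bigl(N^{2(1+\gamma)/(3+\delta-\gamma)}\bigr)\,t_q$, and $\Theta\bigl(N^{(4-2\delta)/(3+\delta-\gamma)}\bigr)\,t_r$, while $n^{2}=\Theta(N^{4/(3+\delta-\gamma)})$.

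Finally, apply the conditional hardness. If the expected running time of $\mathbb{A}$ were $O(n^{2-\Omega(1)})$, i.e. $O(N^{4/(3+\delta-\gamma)-\Omega(1)})$, then Theorem~\ref{thm:improved_reduction_reporting} would give an expected $O(n^{2-\Omega(1)})$-time algorithm for \IntegerThreeSUM{}, contradicting the conjecture. Therefore $N\cdot t_i+N^{\frac{2(1+\gamma)}{3+\delta-\gamma}}t_q+N^{\frac{4-2\delta}{3+\delta-\gamma}}t_r=\Omega(N^{\frac{4}{3+\delta-\gamma}-o(1)})$ unless the \IntegerThreeSUM{} conjecture is false, which is the claim. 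The only points needing care are routine: that the incremental structure can faithfully realize the static instance (an insertions-only prefix followed by the queries suffices) and that expectations compose across the $\Theta(n^{1+\gamma})$ queries. It is also worth noting that the hypotheses $\gamma<1$ and $\delta<1$ are exactly what keeps the bound meaningful --- they force the exponents $1$, $2(1+\gamma)/(3+\delta-\gamma)$, and $(4-2\delta)/(3+\delta-\gamma)$ to all be strictly smaller than the target exponent $4/(3+\delta-\gamma)$, so none of $t_i,t_q,t_r$ can absorb the lower bound on its own.
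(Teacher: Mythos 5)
Your proof is correct and follows essentially the same approach as the paper: simulate the offline hard instance from Theorem~\ref{thm:improved_reduction_reporting} by an insertion-only prefix followed by the prescribed queries, bound the total time by $N\,t_i + n^{1+\gamma}t_q + n^{2-\delta}t_r$, and substitute $n = \Theta(N^{2/(3+\delta-\gamma)})$. The only difference is that you spell out the accounting in slightly more detail than the paper does.
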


\begin{proof}
An algorithm for solving the incremental set intersection decision problem can be used to solve \IntegerThreeSUM{} via Theorem~\ref{thm:improved_reduction_reporting} by first inserting all of the $\Theta(n^{\frac{3+\delta-\gamma}{2}})$ elements into their appropriate sets and then performing the $\Theta(n^{1+\gamma})$ queries. Therefore, unless the \IntegerThreeSUM{} conjecture is false, we have $\Theta(n^{\frac{3+\delta-\gamma}{2}} t_i + n^{1+\gamma}t_q + n^{2-\delta}t_r) = \Omega(n^{2-o(1)})$. Substituting $n=N^{\frac{2}{3+\delta-\gamma}}$ completes the proof.
\qed \end{proof}

Let us consider a few points on the lower bound curve of Theorem~\ref{thm:dynamic_set_int_report_lb}. The coefficients of the terms $t_i$, $t_q$, and $t_r$ are equal when $\gamma = \delta = 1/2$, which translates to $t_i+t_q +t_r= \Omega(N^{1/3-o(1)})$. Thus, at least one of the operations must cost roughly $\Omega(N^{1/3})$ time. Furthermore, if $t_q=t_r=O(1)$ then $t_i = \Omega(N^{\frac{4}{3+\delta-\gamma}-1-o(1)})$ so by making $\delta$ as small as possible and $\gamma$ as large as possible we obtain $t_i = \Omega(N^{1-o(1)})$. This matches a trivial algorithm where we explicitly maintain each set intersection. However, if $t_i=t_r=O(1)$ then $t_q = \Omega(N^{\frac{2-2\gamma}{3+\delta-\gamma}-o(1)})$, and so by making $\delta$ as small as possible and setting $\gamma = 0$ we obtain $t_q = \Omega(N^{2/3-o(1)})$. Finally, if $t_i=t_q=O(1)$ then $t_r = \Omega(N^{\frac{2\delta}{3+\delta-\gamma}-o(1)})$ and so making $\gamma$ and $\delta$ as large as possible we obtain $t_r = \Omega(N^{2/3 -o(1)})$.

\begin{theorem}[\cite{KPP14a}]\label{thm:improved_reduction}
For any constant $0< \gamma < 1$
let $\mathbb{A}$ be an algorithm for offline set intersection decision problem on a family $F$ of sets such that $N = \sum_{S\in F} |S| = \Theta(n^{2-\gamma})$, and there are $\Theta(n^{1+\gamma})$ pairs of sets whose disjointness needs to be determined. If $\mathbb{A}$ runs in expected $O(n^{2-\Omega(1)})$ time, then \IntegerThreeSUM{} can be solved in expected $O(n^{2-\Omega(1)})$ time.
\end{theorem}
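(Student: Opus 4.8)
The plan is to instantiate the hashing-based reduction of \Patrascu~\cite{Patrascu10} in the sharpened form of~\cite{KPP14a}. First I would reduce \IntegerThreeSUM{} on $n$ integers to a balanced three-part version --- given sets $A,B,C$ of $n$ integers each, all from a range of size $O(n^3)$, decide whether there exist $a\in A$, $b\in B$, $c\in C$ with $a+b+c=0$ --- which is a standard $O(n)$-time deterministic step. Next I would draw a random \emph{almost-linear} hash function $h$ onto $R$ buckets (so that $h(x+y)-h(x)-h(y)$ always lies in a fixed constant-size set $E\subseteq\mathbb{Z}_R$) and bucket each of $A,B,C$ by $h$. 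Two properties are used: (i) any zero-sum triple $(a,b,c)$ has $h(a)+h(b)+h(c)\in E$, so once the buckets of $a$ and $b$ are fixed the bucket of $c$ is pinned down to $O(1)$ possibilities, shrinking the $\Theta(n^2)$ raw pairs to $O(R^2)$ relevant bucket-pairs; and (ii) with probability bounded away from $0$ no bucket holds more than $O((n/R)\log n)$ elements, the few ``heavy'' buckets being handled by brute force; if the draw of $h$ is bad we redraw, which is what makes the final bound an \emph{expected} (Las Vegas) one.

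Each relevant bucket-triple is a miniature three-part \ThreeSUM{} on sets of size $\approx n/R$, and the next step is to realize the whole collection as a single offline set-intersection instance of the prescribed shape. I would further partition the bucket contents into blocks at a granularity $s$ to be fixed later, precompute for the block combinations that are compatible under $h$ a small ``combination set'' that records exactly which target values that combination can realize, and let $F$ consist of these combination sets together with the raw blocks; a disjointness query is posed for every compatible combination, and by construction some queried pair is non-disjoint iff the original \IntegerThreeSUM{} instance has a solution. Since every set element is a genuine integer from a universe of size $\operatorname{poly}(n)$, a non-empty intersection always witnesses a true arithmetic relation, so there is nothing to verify; the only randomness is the choice of $h$. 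Given $\mathbb{A}$'s yes/no answers one recombines them --- an OR over the bucket-triples, together with the brute-force results on heavy buckets --- in time linear in the number of queries.

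It then remains to do the accounting and choose $R$ and $s$ so that the number of queried pairs is $\Theta(n^{1+\gamma})$ while $\sum_{S\in F}|S|=\Theta(n^{2-\gamma})$; since $\gamma$ is a fixed constant in $(0,1)$ this is arranged with a constant number of levels of the block construction. The total overhead --- building $F$, issuing the queries, brute-forcing heavy buckets, and post-processing --- is $O\big(n^{2-\gamma}+n^{1+\gamma}+n\log^{O(1)}n\big)=O(n^{2-\Omega(1)})$, so if $\mathbb{A}$ runs in expected $O(n^{2-\Omega(1)})$ time on instances of this shape, then \IntegerThreeSUM{} is solved in expected $O(n^{2-\Omega(1)})$ time, as claimed.

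The step I expect to be the main obstacle is exactly this final balancing act --- above all, forcing the \emph{total} set size down to $\Theta(n^{2-\gamma})$ rather than the $\Theta(n^2)$ that the most direct reductions incur. Materializing a sumset for every pair of blocks costs $\sum_{p,q}|A_p|\,|B_q|=\Theta(n^2)$ in aggregate, and keeping a shifted copy of $C$ for every element of $B$ likewise costs $\Theta(n^2)$, regardless of how the blocks are chosen; getting below $n^2$ therefore requires a more aggressive use of $h$ to prune both which combination sets are stored and which pairs are queried, paid for by raising the query count to $\Theta(n^{1+\gamma})$. One must verify that for every constant $\gamma<1$ there is a consistent choice of $R$ and $s$, that the aggregate mass of the heavy buckets stays within the $n^{2-\Omega(1)}$ budget under the concentration bounds on $h$, and that redrawing $h$ on a bad event preserves the expected running time.
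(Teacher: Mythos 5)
This theorem is imported: the paper states it with the citation tag \cite{KPP14a} and gives no proof, so there is no in-paper argument to compare you against. The only option is to judge your sketch on its own merits as a reconstruction of the Kopelowitz--Pettie--Porat reduction.

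Your outline correctly identifies the \Patrascu{}-style skeleton --- an almost-linear hash into $R$ buckets, the fact that $h(a)+h(b)$ pins the bucket of $-c$ to $O(1)$ candidates, separate brute force on heavy buckets, and Las Vegas redrawing on a bad hash. But you yourself flag the crux as an unresolved obstacle, and you are right that it is unresolved: you never actually show how to drive $N=\sum_{S\in F}|S|$ down to $\Theta(n^{2-\gamma})$ while keeping the query count at $\Theta(n^{1+\gamma})$. You observe (correctly) that materializing a sumset per compatible block pair costs $\Theta(n^2)$ total, and that shifting a copy of a bucket for every element likewise costs $\Theta(n^2)$, and then you write ``getting below $n^2$ therefore requires a more aggressive use of $h$'' without saying what that use is. That tradeoff curve in $\gamma$ --- the ability to slide mass from $N$ to the query count and back --- is the entire content of the theorem; the vanilla \Patrascu{} reduction already gives a specific point on it, and the \cite{KPP14a} contribution is precisely the generalization to all $\gamma\in(0,1)$. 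Without the concrete mechanism (the choice of which objects become sets in $F$, which become queries, and how two levels of hashing or a Conv\ThreeSUM{} intermediate step keeps both budgets under control simultaneously), your argument does not establish the stated bounds. A secondary point you also leave unaddressed: once you hash values into a smaller range to shrink sets, a nonempty intersection may be a hash collision rather than a genuine arithmetic relation; since Theorem~\ref{thm:improved_reduction} concerns the \emph{decision} (disjointness) problem, not the witness problem, you need to argue either that the construction admits no false positives or that the reduction tolerates them, and your sketch asserts ``a non-empty intersection always witnesses a true arithmetic relation'' without justifying it in the presence of the very hashing you rely on to control set sizes.
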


\begin{theorem}\label{thm:dynamic_set_int_decision_lb}
{\bf (Set Intersection Emptiness Lower Bound)}
Fix $0< \gamma < 1$.  Any algorithm for solving the incremental set intersection emptiness problem with insertion time
$t_i$ and query time of $t_q$ must have $N\cdot t_i+N^{\frac{1+\gamma}{2-\gamma}}t_q = \Omega(N^{\frac{2}{2-\gamma}-o(1)})$ unless the \IntegerThreeSUM{} conjecture is false.
\end{theorem}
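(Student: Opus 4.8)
The plan is to mirror the proof of Theorem~\ref{thm:dynamic_set_int_report_lb}, but now invoking Theorem~\ref{thm:improved_reduction} in place of Theorem~\ref{thm:improved_reduction_reporting}. Suppose we are given an algorithm for the incremental set intersection emptiness problem with insertion time $t_i$ and query time $t_q$. First I would use it to build an algorithm for the offline set intersection decision problem described in Theorem~\ref{thm:improved_reduction}: insert, one at a time, all $\Theta(n^{2-\gamma})$ elements into their respective sets, and only afterwards issue the $\Theta(n^{1+\gamma})$ disjointness queries. Since an incremental structure by definition supports exactly this insert-then-query schedule, the simulation is legitimate, and its total running time is $\Theta(n^{2-\gamma} t_i + n^{1+\gamma} t_q)$.

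Next I would apply Theorem~\ref{thm:improved_reduction}: if this quantity were $O(n^{2-\Omega(1)})$, then \IntegerThreeSUM{} would be solvable in expected $O(n^{2-\Omega(1)})$ time, contradicting the conjecture. Hence, unless the \IntegerThreeSUM{} conjecture is false, $n^{2-\gamma} t_i + n^{1+\gamma} t_q = \Omega(n^{2-o(1)})$.

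Finally I would carry out the change of variable dictated by the reduction's size parameter. Since $N = \Theta(n^{2-\gamma})$, we have $n = \Theta(N^{1/(2-\gamma)})$; because $n$ and $N$ are polynomially related, the $o(1)$ term in the exponent is preserved under the substitution. Plugging in yields $N\cdot t_i + N^{(1+\gamma)/(2-\gamma)} t_q = \Omega(N^{2/(2-\gamma)-o(1)})$, exactly the claimed bound, for each fixed $0<\gamma<1$.

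As for the main obstacle: there is no deep one here. The argument is a routine transfer of the offline conditional lower bound of Theorem~\ref{thm:improved_reduction} to the incremental setting, via the elementary observation that an incremental data structure handles an all-insertions-followed-by-all-queries workload with no loss. The only points needing a little care are the bookkeeping of the exponents and of the $o(1)$ term across the $n \leftrightarrow N$ substitution, and (as in Theorem~\ref{thm:dynamic_set_int_report_lb}) stating the conclusion as a tradeoff curve rather than a single bound.
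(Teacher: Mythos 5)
Your proposal is correct and follows essentially the same route as the paper: simulate the offline instance from Theorem~\ref{thm:improved_reduction} by first performing all $\Theta(n^{2-\gamma})$ insertions and then the $\Theta(n^{1+\gamma})$ emptiness queries, observe that the resulting running time $\Theta(n^{2-\gamma}t_i + n^{1+\gamma}t_q)$ must be $\Omega(n^{2-o(1)})$ under the \IntegerThreeSUM{} conjecture, and then substitute $n = N^{1/(2-\gamma)}$.
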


\begin{proof}
An algorithm for solving the incremental set intersection decision problem can be used to solve \IntegerThreeSUM{} via Theorem~\ref{thm:improved_reduction} by first inserting all of the $\Theta(n^{2-\gamma})$ elements into their appropriate sets and then performing the $\Theta(n^{1+\gamma})$ queries. Therefore, unless the \IntegerThreeSUM{} conjecture is false, we have
$\Theta(n^{2-\gamma} t_i + n^{1+\gamma}t_q) = \Omega(n^{2-o(1)})$. Substituting $n=N^{\frac{1}{2-\gamma}}$ completes the proof.

\qed \end{proof}

Let us consider a few points on the lower bound curve of Theorem~\ref{thm:dynamic_set_int_decision_lb}. The coefficients of the terms $t_i$ and $t_q$ are equal when $\gamma = 1/2$, which translates to $t_i+t_q = \Omega(N^{1/3-o(1)})$. Thus, at least one of the operations must cost roughly $\Omega(N^{1/3})$ time. Furthermore, if $t_i=O(1)$ then $t_q = \Omega(N^{\frac{1-\gamma}{2-\gamma}-o(1)})$ so by making $\gamma$ as small as possible we obtain $t_q = \Omega(N^{1/2-o(1)})$. Finally, if $t_q=O(1)$ then $t_i = \Omega(N^{\frac{\gamma}{2-\gamma}-o(1)})$ so by making $\gamma$ as large as possible we obtain $t_i = \Omega(N^{1/2-o(1)})$.

\end{document}